\definecolor{keywordcolor}{rgb}{0.8,0.1,0.5}
\definecolor{webgreen}{rgb}{0,.5,0}
\definecolor{keywordcolor}{rgb}{0.8,0.1,0.5}
\definecolor{webgreen}{rgb}{0,.5,0}
\newtheorem{Thm}{Theorem}
\newtheorem{defn}{Definition}
\newcommand{\dpautogm}{DP-AuGM\xspace}
\newcommand{\dpdl}{DP-DL\xspace}
\newcommand{\dpvae}{DP-VaeGM\xspace}
\newcommand{\hospital}{Hospital Data\xspace}
\newcommand{\mnist}{MNIST\xspace}
\newcommand{\adult}{Adult Census Data\xspace}
\newcommand{\malware}{Malware Data\xspace}
\newcommand{\spate}{sPATE\xspace}
\newcommand{\hmark}{\textcolor{blue}{\ding{51}}\textsuperscript{\textcolor{blue}{\kern-1em\ding{55}}}}
\newcommand{\Rmnum}[1]{\expandafter\@slowromancap\romannumeral #1@}
\begin{document}

  \author[1]{Qingrong Chen}

  \author[2]{Chong Xiang}

  \author[3]{Minhui Xue}

  \author[4]{Bo Li}
  
  \author[5]{Nikita Borisov}
  
  \author[6]{Dali Kaafar}
  
  \author[7]{Haojin Zhu}

  \affil[1]{University of Illinois at Urbana-Champaign, Email: qc16@illinois.edu;}

  \affil[2]{Shanghai Jiao Tong University, Email: danco2015@sjtu.edu.cn;}

  \affil[3]{Macquarie University and Data61-CSIRO, Email: minhui.xue@mq.edu.au;}

  \affil[4]{University of Illinois at Urbana-Champaign, Email: lxbosky@gmail.com;}
  
  \affil[5]{University of Illinois at Urbana-Champaign, Email: nikita@illinois.edu;}
  
   \affil[6]{Macquarie University and Data61-CSIRO, Email: dali.kaafar@mq.edu.au;}

 \affil[7]{Shanghai Jiao Tong University, Email: zhuhaojin@gmail.com.}

  \title{\huge Differentially Private Data Generative Models}

  \runningtitle{Differentially Private Data Generative Models}

  \begin{abstract}
{Deep neural networks (DNNs) have recently been widely adopted in various applications, and such success is largely due to a combination of algorithmic breakthroughs, computation resource improvements, and access to a large amount of data. 
However, the large-scale data collections required for deep learning often contain sensitive information, therefore raising many privacy concerns.
Prior research has shown several successful attacks in inferring sensitive training data information, such as model inversion~\cite{usenix2014,ccs15}, membership inference~\cite{shokri2016membership}, and generative adversarial networks (GAN) based leakage attacks against collaborative deep learning~\cite{hitaj2017deep}. In this paper, to enable learning efficiency as well as to generate data with privacy guarantees and high utility, we propose a \textbf{d}ifferentially \textbf{p}rivate \textbf{au}toencoder-based \textbf{g}enerative \textbf{m}odel (\dpautogm) and a \textbf{d}ifferentially \textbf{p}rivate \textbf{va}riational auto\textbf{e}ncoder-based \textbf{g}enerative \textbf{m}odel (\dpvae). We  evaluate the robustness of two proposed models. We show that \dpautogm can effectively defend against the model inversion, membership inference, and GAN-based attacks. We also show that \dpvae is robust against the membership inference attack. We conjecture that the key to defend against the model inversion and GAN-based attacks is not due to differential privacy but the perturbation of training data. Finally, we demonstrate that both \dpautogm and \dpvae can be easily integrated with real-world machine learning applications, such as \emph{machine learning as a service} and \emph{federated learning}, which are otherwise threatened by the membership inference attack and the GAN-based attack, respectively. }
\end{abstract}

\keywords{Differential privacy; Generative models; Robustness}

\maketitle
\section{Introduction}
Advanced machine learning techniques, and in particular deep neural networks (DNNs), have been applied with great success to a variety of areas, including speech processing~\cite{hinton2012deep}, medical diagnostics~\cite{ciresan2012deep}, image processing~\cite{ciregan2012multi}, and robotics~\cite{zhang2015towards}.
Such success largely depends on massive collections of data for training machine learning models.
However, these data collections often contain sensitive information and therefore raise many privacy concerns. Several privacy violation attacks have been proposed to show that it is possible to extract sensitive and private information from different learning systems. 
Specifically, Fredrikson et al.~\cite{usenix2014} proposed to infer sensitive patients' genomic markers by actively probing the outputs from the model and auxiliary demographic information about them. In a follow-up study, Fredrikson et al.~\cite{ccs15} developed a more robust model inversion attack using predicted confidence values to recover confidential information of a training set (e.g., human faces). Shokri and Shmatikov~\cite{shokri2016membership} proposed a membership inference attack, which tries to predict whether a data point belongs to the training set. More recently, a generative adversarial network (GAN) based attack against collaborative deep learning~\cite{hitaj2017deep} was proposed against distributed machine learning systems, where users collaboratively train a model by sharing gradients of their locally trained models through a parameter server. The GAN-based attack has shown that even when the training process is differentially private~\cite{abadi2016deep,papernot2016semi}, it is still possible to mount an attack to extract sensitive information from original training data~\cite{hitaj2017deep} as trusted servers may leak information unintentionally. Given the fact that Google has proposed federated learning based on distributed machine learning~\cite{mcmahan2017federated} and has already deployed it to mobile devices, such a GAN based attack~\cite{hitaj2017deep} raises serious privacy concerns.

In this paper, we propose to use differentially private data generative models to publish differentially private synthetic data that can both protect privacy and retain high data utility. Such data generative models are trained over private/sensitive data (we will denote it as private data to be aligned with the definition in~\cite{papernot2018scalable}) in a differentially private manner, and are able to generate new surrogate data for later learning tasks. As a result, the generated data preserves the statistical properties of the private data, which enables high learning efficacy, while also protecting the privacy of the private data. The approach of using differentially private data generative models has several advantages. \textit{First}, with the generative models, privacy can be preserved even if the entire trained model or the generated data is accessible to an adversary. \textit{Second}, it can be easily integrated with other learning tasks without adding much overhead, since only the training data is changed. \textit{Third}, the data generation process can be done locally on the user side, which eliminates the need for a trusted server (that can be attacked and compromised) for protecting the private data from users. \textit{Finally}, we can prove that any machine learning model trained over the generated data is also differentially private w.r.t. the private data. 
 
To achieve this, we build two distinct differentially private generative models. First, we propose a \textbf{d}ifferentially \textbf{p}rivate \textbf{au}toencoder-based \textbf{g}enerative \textbf{m}odel  (\dpautogm). \dpautogm works for the scenario when private data is sensitive (i.e., not suitable for releasing to public) while sharing it with other parties will facilitate data analytics. As motivation, consider a hospital not allowed to release its private medical data to public for use, but wants to share the data with universities for, say, data-driven disease diagnosis studies~\cite{hett2018graph,schulam2015framework}. Under this scenario, instead of publishing the medical data directly, the hospital could locally use the  private medical data to train an autoencoder in a differentially private way~\cite{abadi2016deep} and then publish it. Any university interested in researching disease diagnosis independently feeds into the autoencoder their own small amounts of sanitized/public medical data for generating new data for machine learning tasks. Here, the sanitized/public data often refers to the publicly available data, such as~\cite{azimo,disease}. Ultimately, the private medical data owned by the hospital are successfully synthesized with public data owned by each university in a differentially private manner, so that the privacy of the private medical data is preserved and the utility of the data-driven disease study is retained. 
Another motivating example is two companies that want to collaborate on a data intelligence task. A data-rich company X may wish to aid company~Y in developing a model that helps maximize revenue, but is unwilling or legally unable to share its data with Y directly due to their sensitive nature. Again, company X can train a differentially private autoencoder (i.e., \dpautogm), on its large data set and share it with company Y. Then company Y could use its own, smaller, dataset along with the autoencoder to train a model that synthesizes information from both datasets. 

The key advantage of the \dpautogm approach is that the representation-learning task performed on the private data significantly boosts the accuracy of the machine learning task, as compared to using the public\footnote{
For simplicity, we refer to the dataset that is passed through the autoencoder as public. In the second motivating example, if company Y only uses the trained model internally, both X's and Y's datasets remain private, but from an analysis perspective, we focus on the potential privacy leaks of X's data through either the shared autoencoder or the final trained model.} data alone, in cases where the public data has too few samples to successfully train a deep learning model~\cite{lecun2015deep}. We demonstrate this using extensive experiments on four datasets (i.e., \mnist, \adult, \hospital, and \malware), showing that \dpautogm can achieve high data utility even under a small privacy budget (i.e., $\epsilon<1$) for private data.

Second, we  propose a \textbf{d}ifferentially \textbf{p}rivate \textbf{va}riational auto\textbf{e}ncoder-based \textbf{g}enerative \textbf{m}odel (\dpvae). Compared with the ordinary autoencoder, the VAE~\cite{kingma2013auto} has an extra sampling layer which can sample data from a Gaussian distribution. Using this feature, \dpvae is capable of generating an arbitrary amount of data by feeding Gaussian noise to the model. Similar to \dpautogm, the proposed \dpvae is trained on the private data in a differentially private way~\cite{abadi2016deep} and is then released to public for use. 
Although imposing Gaussian noise on the sampling layer is useful in generating new data (i.e., capable of generating infinite data), we identify that the VAE does not perform stably in generating high-quality data points. Thus, in our paper, we only evaluate \dpvae on the image dataset \mnist. We show that the data generated from \dpvae can successfully retain high utility and preserve data privacy. Under the setting of $\epsilon=8$ and $\delta=10^{-3}$, the prediction accuracy of \dpvae is more than $97\%$ on \mnist.

To further demonstrate the robustness of our two proposed models, we evaluate both \dpautogm and \dpvae with three existing attacks---model inversion attack~\cite{usenix2014,ccs15}, membership inference attack~\cite{shokri2016membership}, and GAN-based attack against collaborative deep learning~\cite{hitaj2017deep}. The results show that \dpautogm can effectively mitigate all of the aforementioned attacks and \dpvae is robust against the membership inference attack. As both \dpautogm and \dpvae satisfy differential privacy, while only \dpautogm is robust to the model inversion and GAN-based attacks, we conjecture that the key to defend against these two attacks is not due to differential privacy but the perturbation of training data.

Finally, we integrate our proposed generative models with two real-world applications, which are threatened by the aforementioned attacks. The first application is \emph{machine learning as a service} (MLaaS). Traditionally, users need to upload all of their data to the MLaaS (such as Amazon Machine Learning~\cite{amazon}) to train a model, due to the lack of computational resources on the user side. However, if these platforms are compromised, all of the users' data will be leaked. Thus, we propose to integrate \dpautogm and \dpvae with this application, so that even if the platforms are compromised, the privacy of users' data can still be protected. We empirically show that after being integrated with \dpautogm and \dpvae, this application still maintains high utility. The second application is \emph{federated learning}~\cite{mcmahan2017federated}, which has been recently shown to be vulnerable to GAN-based attacks~\cite{hitaj2017deep}. As \dpautogm is more effective in defending against this attack, we try to combine \dpautogm with this application. We show that for federated learning, even under small privacy budgets ($\epsilon=1$, $\delta=10^{-5}$), \dpautogm only decreases original utility by~$5\%$. 

The contributions of this paper are as follows:
\begin{itemize}
\item We propose two differentially private data generative models \dpautogm and \dpvae, which can provide differential privacy guarantees for the generated data, and retain high data utility for various machine learning tasks.
In addition, we compare the learning efficiency of the generated data with state-of-the-art private training methods. We show that the utility of \dpautogm outperforms \emph{Deep Learning with Differential Privacy} (\dpdl)~\cite{abadi2016deep} and \emph{Scalable Private Learning with PATE} (\spate)~\cite{papernot2018scalable} under any given privacy budget. We also show that \dpvae can achieve comparable learning efficiency in comparison with \dpdl.
\item We empirically evaluate and demonstrate that the proposed model \dpautogm is robust against existing privacy attacks---model inversion attack, membership inference attack, and GAN-based attack against collaborative deep learning;  \dpvae is robust against the membership inference attack. We conjecture that the key to defend against model inversion and GAN-based attacks is to distort the training data while differential privacy is targeted to protect membership privacy.
\item We integrate the proposed generative models with \emph{machine learning as a service} and \emph{federated learning} to protect data privacy. We show that such integration can retain high utility for these real-world applications, which are currently threatened by privacy attacks. 
\end{itemize}

To the best of our knowledge, this is the first paper to build and systematically examine differentially private data generative models that can defend against  contemporary privacy attacks on learning systems.

\section{Background}
In this section, we introduce some details about privacy attacks, differential privacy, and data generative models.

\subsection{Privacy Attacks on Learning Systems}
\noindent \textbf{Model Inversion Attack.}
This attack was first introduced by Fredrikson et al.~\cite{usenix2014} and further developed in~\cite{ccs15}. The goal of this attack is to recover sensitive attributes within original training data. For example, an attacker can infer the genome type of patients from medical records data or recover distinguishable photos by attacking a facial recognition API. Such a vulnerability mainly results from the rich information captured by the machine learning models, which can be leveraged by the attacker to recover original training data by constructing data records with high confidence. In this paper, we mainly focus on a strong adversarial scenario where attackers have white-box access to the model so as to evaluate the robustness of our proposed differentially private mechanisms. In this context, an attacker aims to reconstruct data used in the training phase by minimizing the difference between hypothesized and obtained confidence vectors from the machine learning models.

\noindent \textbf{Membership Inference Attack.}
Shokri and Shmatikov~\cite{shokri2016membership} proposed the membership inference attack to determine whether a specific data record is within the training set. This attack also takes advantage of rich information recorded in machine learning models. An attacker first generates data with similar distribution as the original data by querying machine learning models and then uses the generated data to train local models (termed shadow models in \cite{shokri2016membership}) to mimic the behavior of the original models. Finally, the attacker can apply the data provided by the local models to training a classifier and determine whether a given record belongs to the original training dataset.

\noindent \textbf{GAN-based Attack against Collaborative Deep Learning.}
Hitaj et al.~\cite{hitaj2017deep} proposed a GAN-based attack targeting  differentially private collaborative deep learning~\cite{shokri2015privacy}. They showed that an attacker may use GANs to generate instances which well approximate data from other parties in a collaborative setting. The adversarial generator is improved based on the information returned from the trusted entity, and eventually achieves high attack success rate in the collaborative scenario even when differential privacy is guaranteed for each party.

\subsection{Differential Privacy}
Differential privacy provides strong privacy guarantees for data privacy analysis~\cite{dwork2014algorithmic}. It ensures that attackers cannot infer sensitive information about input datasets merely based on the algorithm outputs. The formal definition is as follows.
 \begin{defn}\label{def:dp1}
 \label{dpdef}
A randomized algorithm $\mathcal{A}: \mathcal{D} \to \mathcal{R}$ with domain~$\mathcal{D}$ and range $\mathcal{R}$, is ($\epsilon, \delta$)-differentially private if for any two adjacent training datasets $d, d^{\prime}  \subseteq \mathcal{D}$, which differ by at most one training point, and any subset of outputs $S \subseteq \mathcal{R}$, it satisfies that:
\begin{equation*}
            \Pr[\mathcal{A}(d) \in S] \leq \mathrm{e}^{\epsilon} \Pr[\mathcal{A}(d^{\prime}) \in S] + \delta.
\end{equation*}
 \end{defn}

The $\epsilon$ parameter is often called a privacy budget: smaller budgets yield stronger privacy guarantees. The second parameter $\delta$ is a failure rate for which it is tolerated that the privacy bound defined by $\epsilon$ does not hold.

\noindent \textbf{Deep Learning with Differential Privacy (\dpdl)~\cite{abadi2016deep}.} \dpdl achieves DP by injecting random noise in stochastic gradient descent (SGD) algorithm. At each step of SGD, \dpdl computes the gradient for a random subset of training points, followed by clipping, averaging out each gradient, and adding noise in order to protect privacy. \dpdl provides a differentially private training algorithm with tight DP guarantees based on moments accountant analysis~\cite{abadi2016deep}.

\subsection{Data Generative Models}

\noindent \textbf{Autoencoder.}
An autoencoder is a widely used unsupervised learning model in many scenarios, such as natural language processing~\cite{deng2010binary} and image recognition~\cite{masci2011stacked}. Its goal is to learn a representation of data, typically for the purpose of dimensionality reduction~\cite{dp_book, denoise_auto, stack_auto}. It simultaneously trains an encoder, which transforms a high-dimenstional data point to a low-dimensional representation, and a decoder, which reconstructs a high-dimensional data point from the representation, while trying to   minimize the $2$-norm distance $l_{2}$ between the original and reconstructed data. Through this process, the autoencoder is able to discard those irrelevant features and enhance the performance of machine learning models when facing high-dimensional input data.

\noindent \textbf{Variational Autoencoder (VAE).} Resembling the autoencoder, an variational autoencoder also comprises two parts: the encoder and the decoder~\cite{kingma2013auto,rezende2014stochastic} with a latent variable $z$ sampled from a prior distribution $p(z)=p_{noise}$. Different from the autoencoder of which the encoder only tries to reduce the data into lower dimensions, the encoder inside VAE tries to encode the input data into a Gaussian probability density domain~\cite{kingma2013auto}. Mathematically, the encoder approximates $q(z|x)$, which is also a neural network (encoder), with input $z$ conditioned on the data $x$. Then, a representation of the data will be sampled based on the output from the encoder. Finally, the decoder tries to reconstruct a data point based on sampled noise, which approximates the posterior $p(x|z)$. The two neural networks, encoder and decoder, are trained to maximize a lower bound of the log-likelihood of the data $\log p(x)$:
\begin{equation*}
\mathbb{E}_{q(z|x)}[\log p(x|z)]-{\mathrm{KL}}(q(z|x)||p(z)),
\end{equation*}
where ${\mathrm{KL}}$ is the Kullback-Leibler divergence~\cite{Cover1959Information}.

Sampling from the VAE is achieved by sampling from the (typically Gaussian) prior~$p(z)$ and passing the samples through the decoder network.

\section{Differentially Private Data Generative Models}

\subsection{Problem Statement}\label{statement}
Let $\mathcal{X}$ be the set of training data containing sensitive information, and we will denote it as private data similarly with~\cite{papernot2016semi}. We denote~$\mathcal{M}$ as a data generative model which is trained on the private data, and is able to generate new data $\mathcal{X}^{\prime}$ for later training usage, as shown in Figure~\ref{architecture}. 
To protect privacy of the private data, the goal of the generative model is to prevent an attacker from recovering~$\mathcal{X}$, or inferring sensitive information from $\mathcal{X}$ based on~$\mathcal{X}^{\prime}$. Formally, we give the definition of the differentially private generative model as below.

\begin{defn}
\label{def:dp2}
A generative model $\mathcal{M}: \mathcal{D} \to \mathcal{Z}$ with domain~$\mathcal{D}$ and range $\mathcal{Z}$,
is $(\epsilon, \delta)$-differentially private, if for any adjacent private datasets $\mathcal{X}, \hat{\mathcal{X}} \subseteq \mathcal{D}$ which only differ by one entry, and any subset of output space $S \subseteq \mathcal{Z}$, it satisfies that:  
$$\Pr [\mathcal{M}(\mathcal{X}) \in S] \leq e^\epsilon \Pr[\mathcal{M}(\hat{\mathcal{X}}) \in S]+\delta.$$
\end{defn}
The goal of the proposed differentially private generative model is to generate data with high utility while protecting sensitive information within the data. Current research shows that even algorithms proved to be differentially private can also leak private information in the face of certain carefully crafted attacks on different levels. Therefore, in this paper, we will also analyze several existing attacks to show that the proposed differentially private generative models can defend against the state-of-the-art attacks.

\begin{figure*}[htbp]
\centering
\includegraphics[width=0.9\textwidth]{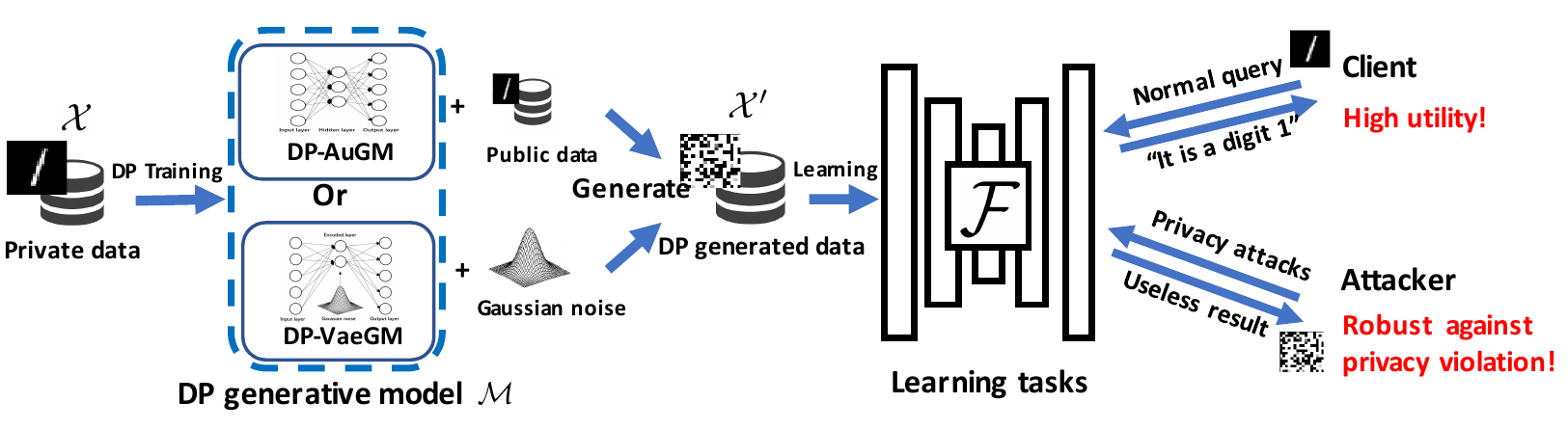}
\caption{Overview of proposed differentially private data generative models. Sensitive private training data $\mathcal{X}$ is fed into the generative model $\mathcal{M}$ to generate private surrogate dataset $\mathcal{X}^{\prime}$. After publishing $\mathcal{X}^{\prime}$, different learning models can be trained on $\mathcal{X}^{\prime}$ to protect privacy of $\mathcal{X}$ while achieving high learning accuracy (data utility).} 
\label{architecture}
\end{figure*}

\subsection{Approach Overview}
To protect private data privacy, we propose to use the private data to train a differentially private generative model and use this generative model to generate new synthetic data for further learning tasks, which can both protect privacy of original data and retain high data utility. 
As the newly generated data is differentially private w.r.t. the private data, it will be hard for attackers to recover or synthesize the private data,  or infer other information about the private data in learning tasks. Specifically, we choose an autoencoder and a variational autoencoder (VAE) as our two generative models. The overview of our proposed differentially private data generative models is shown in Figure~\ref{architecture}. First, the private data is used to train the generative model with differential privacy, which is either an autoencoder (\dpautogm) or a variational autoencoder (\dpvae) based model. Then the generated data from the trained differentially private generative model is published and sent to targeted learning tasks. It should be noted that \dpautogm requires the users to hold a small amount of data (denoted as public data in the figure) to generate new data while \dpvae is able to directly generate an arbitrary number of new data points by feeding Gaussian noise into the model. The goal of our design is to ensure that the learning accuracy on the generated data is high for ordinary users (high data utility), while the attackers cannot obtain sensitive information from the private data.

\subsection{Privacy and Utility Metrics}
\label{metric}

Here we will briefly introduce privacy and data utility metrics used throughout the paper.

\noindent \textbf{Privacy Metric.}
We refer to the privacy budget ($\epsilon$, $\delta$) as the privacy metric during evaluation.
We then evaluate how robust the proposed generative models are against three state-of-the-art attacks---model inversion attack~\cite{fredrikson2015model}, membership inference attack~\cite{shokri2016membership}, and GAN-based attack against collaborative deep learning~\cite{hitaj2017deep}. Specifically, to quantitatively evaluate how our models deal with the membership inference attack, we use the metric \textit{privacy loss} as defined in~\cite{pyrgelis2017knock}.

\noindent \textit{\textbf{Privacy Loss (PL).}}
Within membership inference attack, we measure the privacy loss as the inference precision increment over random guessing baseline (e.g., 0.5), where the adversary's attack precision rate $P$ is defined as the fraction of records that are correctly inferred as members of the training set among all the positive predictions. We define privacy loss $PL$ as follows:
\begin{equation*}
    PL=
\begin{cases}
   \frac{P-0.5}{0.5},   &  \text{if  $P> 0.5$} \\
   $0$,
        &  \text{otherwise}
\end{cases}
\end{equation*}

\noindent \textbf{Utility Metric.}
We use the prediction accuracy to measure utility for different models. Considering the goal of machine learning is to build an effective prediction model, it is natural to evaluate how our proposed model performs in terms of prediction accuracy. To be specific, we will evaluate the prediction model which is trained on the generated data from the differentially private generative model.

\subsection{DP autoencoder-based Generative Model (\dpautogm)}
Here we introduce how to apply the differentially private autoencoder-based generative model (\dpautogm) to protect  privacy of the private data while retaining high utility for the generated data.

For \dpautogm, we first train an autoencoder with our private data using a differentially private training algorithm. Then, we publish the encoder and drop the decoder. New data will be generated (encoded) by feeding the users' own data (i.e., public data) into the encoder. These newly generated data can be used to train the targeted learning systems in the future with privacy guarantees for the private data. In this way, the generated data could synthesize the information from both private data and public data which enables high learning efficiency, and provide privacy guarantees for private data at the same time. As we will show in the evaluation section, the user only needs a small amount of data to achieve good learning efficiency and we also compare the learning efficiency when the user only uses his own data to do the training. During inference time, the encoder will also be used to encode the test data for model predictions. Since the encoder is differentially private w.r.t. private data, publishing the encoder does not compromise privacy. 

\dpautogm proceeds as below:
\begin{itemize}
\item First, it is trained with private data using a differentially private algorithm.  
\item Second, it generates new differentially private data by feeding the public data to the encoder.
\item Third, it uses the generated data to train any machine learning model.
\end{itemize}

\noindent \textbf{DP Analysis for \dpautogm.} In this paper, we adopt the training algorithm developed by Abadi et al.~\cite{abadi2016deep} to achieve differential privacy. Based on the moments accountant technique applied in~\cite{abadi2016deep}, we obtain that the training algorithm is $(\mathcal{O}(q\epsilon \sqrt{T}),\delta)$-differentially private. Here $T$ is the number of training steps, $q$ is the sampling probability, and ($\epsilon$, $\delta$) denotes the privacy budget~\cite{abadi2016deep}. Further, by applying the post-processing property of differential privacy~\cite{dwork2014algorithmic}, we can guarantee that the generated data is also differentially private w.r.t. the private data and shares the same privacy bound with the training algorithm. In addition, we will also prove that any machine learning model which is trained on the generated data from \dpautogm, is also differentially private w.r.t. the private data and shares the same privacy bound. This also shows the benefit of training a differentially private generative model: we only need to train one DP generative model and all the machine learning models which are trained over the generated data will be differentially private w.r.t. the private data.
\begin{Thm}\label{Thm_dpaugm}
Let $\mathcal{M}$ denote the differentially private generative model and $\mathcal{X}$ be the private data. Any machine learning model trained over the generated data $\mathcal{M}(\mathcal{X})$, is also differentially private w.r.t. the private data $\mathcal{X}$.  
\end{Thm}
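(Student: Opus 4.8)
The plan is to reduce the statement to the post-processing property of differential privacy, which is stated in~\cite{dwork2014algorithmic} and already alluded to in the surrounding discussion. First I would take as given the hypothesis established by construction: the generative model $\mathcal{M}$ is $(\epsilon,\delta)$-differentially private in the sense of Definition~\ref{def:dp2}. This follows because $\mathcal{M}$ is obtained by training an autoencoder with the algorithm of Abadi et al.~\cite{abadi2016deep}, which yields an $(\epsilon,\delta)$-DP encoder, and then generating $\mathcal{M}(\mathcal{X})$ from that encoder.

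Next I would model the downstream learning step abstractly. Let $\mathcal{T}$ denote the (possibly randomized) training procedure that takes a dataset as input and returns a trained model---for instance, random initialization followed by stochastic gradient descent. The crucial structural observation is that the entire pipeline producing the final model on input $\mathcal{X}$ factors as the composition $\mathcal{T} \circ \mathcal{M}$: the private data $\mathcal{X}$ is consumed only by $\mathcal{M}$, while $\mathcal{T}$ acts solely on the generated data $\mathcal{M}(\mathcal{X})$ and never touches $\mathcal{X}$ directly. I would make this access pattern explicit, since it is precisely what licenses the post-processing argument.

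Then I would apply the post-processing theorem. For adjacent private datasets $\mathcal{X}, \hat{\mathcal{X}}$ and any subset $S$ of the model output space, I would condition on the internal randomness $r$ of $\mathcal{T}$, under which $\mathcal{T}_r$ becomes deterministic, and write $\Pr[(\mathcal{T}\circ\mathcal{M})(\mathcal{X}) \in S]$ as an average over $r$ of $\Pr[\mathcal{M}(\mathcal{X}) \in \mathcal{T}_r^{-1}(S)]$. Bounding each inner term by Definition~\ref{def:dp2} and recombining yields $\Pr[(\mathcal{T}\circ\mathcal{M})(\mathcal{X}) \in S] \le e^{\epsilon}\Pr[(\mathcal{T}\circ\mathcal{M})(\hat{\mathcal{X}}) \in S] + \delta$, which is the $(\epsilon,\delta)$-guarantee for the composed map with the same parameters as $\mathcal{M}$.

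The point to handle carefully---rather than a genuine obstacle---is the randomized nature of $\mathcal{T}$ together with the requirement that its internal randomness be independent of both $\mathcal{X}$ and the randomness of $\mathcal{M}$; the standard reduction expresses a randomized post-processing as a mixture of deterministic ones, so I would either cite this directly from~\cite{dwork2014algorithmic} or spell out the one-line mixture argument sketched above. I would close by noting that, since the argument holds for every training procedure $\mathcal{T}$ in the relevant class, any machine-learning model trained over $\mathcal{M}(\mathcal{X})$ inherits the same privacy bound as $\mathcal{M}$, which is exactly the claim.
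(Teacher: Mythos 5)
Your proposal is correct and follows essentially the same route as the paper: the paper's proof simply observes that the trained model is $\mathcal{F}(\mathcal{M}(\mathcal{X}))$ and invokes the post-processing property of differential privacy from~\cite{dwork2014algorithmic}. The only difference is that you additionally spell out the standard mixture-over-randomness argument proving post-processing for a randomized training procedure, which the paper leaves to the citation.
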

\begin{proof}
We denote $\mathcal{F}(\mathcal{X})$ the machine learning model trained on $\mathcal{X}$, and $\mathcal{F}(\mathcal{M}(\mathcal{X}))$ the learning model trained over the generated data. Then the proof is immediate by directly applying the post-processing property of differential privacy~\cite{dwork2014algorithmic}. 
\end{proof}

\subsection{DP Variational autoencoder-based Generative Model (\dpvae)}
In this section, we will propose \dpvae which could generate an arbitrary number of data points for usage.

\dpvae proceeds as below:
\begin{itemize}[wide=1pt,leftmargin=10pt]
\item First, it initializes with $n$ variational autoencoders (VAEs), where $n$ is the number of the classes for the specific data. Each model $\mathcal{M}_{i}$ is responsible for generating the data of a specific class $1 \leq i \leq n$. We empirically observe that training $n$ generative models results in higher utility than training a single model; we expect this is because a single model would need to capture the class label latent variables following a Gaussian distribution. Using $n$ separate models can also be used to generate a balanced dataset even if the original data are imbalanced.

\item Second, it uses a differentially private training algorithm (such as \dpdl) to train each generative model $\mathcal{M}_{i}$. 

\item Third, it samples data from Gaussian distribution $\mathcal{N}(0,1)$ for the sampling layer of each variational autoencoder. It returns the entire generated data $\mathcal{X}^{\prime}$ by taking the union of generated data from each generative model $\mathcal{M}_{i}$.

\end{itemize}

\noindent \textbf{DP Analysis for \dpvae.} We have adopted the algorithm developed by Abadi et al.~\cite{abadi2016deep} to train each VAE. Thus each training algorithm is $(\mathcal{O}(q\epsilon \sqrt{T}),\delta)$-differentially private. Next we prove that each variational autoencoder (VAE) is a differentially private generate model (see Theorem~\ref{Thm_VAE}) and the entire \dpvae 
is also $(\mathcal{O}(q\epsilon \sqrt{T}),\delta)$-differentially private (see Theorem~\ref{dp_vae}). Formally, to show proofs, we let $\mathcal{X}$ be the private data, $\Theta$ be model parameters, and $\mathcal{X}^{\prime}$ be the generated data (the output of a single VAE). 

\begin{Thm}\label{Thm_VAE}
Let $\mathcal{T}:\mathcal{X} \rightarrow \Theta$ be a VAE training algorithm that is $(\epsilon,\delta)$-differentially private based on~\cite{abadi2016deep}. Let  $f:\Theta \rightarrow \mathcal{X}^{\prime}$ be a mapping that maps model parameters to output, with Gaussian noise generated from a sampling layer of VAE as input. Then $f \circ \mathcal{T}: \mathcal{X} \rightarrow \mathcal{X}^{\prime}$ is $(\epsilon,\delta)$-differentially private.
\end{Thm}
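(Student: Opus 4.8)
The plan is to treat the statement as an application of the post-processing property of differential privacy, exactly as in the proof of Theorem~\ref{Thm_dpaugm}, but with one genuine subtlety: here the map $f$ is itself \emph{randomized}, since it injects fresh Gaussian noise drawn from the sampling layer. The post-processing property is most often stated for deterministic maps, so the real content of the argument is to verify that it survives when the post-processing randomness is drawn independently of the private data $\mathcal{X}$.

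First I would make that randomness explicit. Let $\omega$ denote the Gaussian noise consumed by the sampling layer, drawn from a fixed distribution $\mu$ that does not depend on $\mathcal{X}$, and write $f(\theta,\omega)$ for the deterministic output given parameters $\theta$ and noise $\omega$. For each fixed $\omega$ the map $f_{\omega} := f(\cdot,\omega):\Theta \to \mathcal{X}^{\prime}$ is deterministic, so for any measurable $S \subseteq \mathcal{X}^{\prime}$ the preimage $T_{\omega} := f_{\omega}^{-1}(S) \subseteq \Theta$ is a well-defined subset of the parameter space.

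Next I would condition on $\omega$ to write the output probability of the composition as
\begin{equation*}
\Pr[(f \circ \mathcal{T})(\mathcal{X}) \in S] = \mathbb{E}_{\omega \sim \mu}\!\left[\Pr_{\theta \sim \mathcal{T}(\mathcal{X})}[\theta \in T_{\omega}]\right].
\end{equation*}
For each fixed $\omega$, applying the $(\epsilon,\delta)$-DP guarantee of $\mathcal{T}$ to the set $T_{\omega}$ and to the adjacent pair $\mathcal{X},\hat{\mathcal{X}}$ gives $\Pr_{\theta \sim \mathcal{T}(\mathcal{X})}[\theta \in T_{\omega}] \leq e^{\epsilon}\,\Pr_{\theta \sim \mathcal{T}(\hat{\mathcal{X}})}[\theta \in T_{\omega}] + \delta$. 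Taking the expectation over $\omega$ and using linearity (the factor $e^{\epsilon}$ pulls out and the constant $\delta$ survives because $\mu$ is a probability measure independent of the data) recovers $\Pr[(f \circ \mathcal{T})(\mathcal{X}) \in S] \leq e^{\epsilon}\,\Pr[(f \circ \mathcal{T})(\hat{\mathcal{X}}) \in S] + \delta$, which is exactly $(\epsilon,\delta)$-differential privacy for $f \circ \mathcal{T}$.

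The step I expect to need the most care is the first one: arguing that the sampling-layer noise $\omega$ is genuinely independent of $\mathcal{X}$, so that $\mu$ can be factored out of the expectation with no additional privacy cost. Once that independence is secured, the rest is a routine pointwise application of the DP inequality followed by integration over $\omega$. As a shortcut one could instead cite the randomized version of the post-processing theorem~\cite{dwork2014algorithmic} as a black box, but I would prefer to spell out the conditioning argument, since it is short and makes the role of the noise independence transparent.
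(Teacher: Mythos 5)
Your proposal is correct and takes essentially the same route as the paper: the paper's entire proof is a one-line appeal to the post-processing property of differential privacy~\cite{dwork2014algorithmic}, whose standard formulation already covers randomized post-processing maps (independent auxiliary randomness is handled exactly by the convexity/conditioning argument you give). Your explicit conditioning on the sampling-layer noise $\omega$ is precisely a self-contained proof of that randomized post-processing lemma, so the only difference is that you prove the black box the paper cites.
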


\begin{proof}
The proof is immediate by applying the post processing property  of differential privacy~\cite{dwork2014algorithmic}.
\end{proof}
 
\begin{Thm}\label{dp_vae}
Let a generative model (VAE) of class $i$ $\mathcal{M}_{i}:\mathcal{X}_{i} \rightarrow \mathcal{X}^{\prime}_{i}$ be $(\epsilon,\delta)$-differentially private. Then if $\mathcal{G}_n:\mathcal{X} \rightarrow \Pi_{i=1}^{n} \mathcal{X}^{\prime}_{i}$ is defined to be $\mathcal{G}_n=  \bigcup_{i=1}^{n} \mathcal{M}_{i}$, $\mathcal{G}_n$ is $(\epsilon,\delta)$-differentially private, for any integer~$n$.
\end{Thm}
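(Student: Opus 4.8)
The plan is to recognize this as an instance of \emph{parallel composition}, and the crucial structural fact — which is easy to overlook — is that the per-class models act on \emph{disjoint} slices of the private data. Since each $\mathcal{M}_i$ is trained only on $\mathcal{X}_i$, the subset of class-$i$ records, the sets $\mathcal{X}_1,\dots,\mathcal{X}_n$ partition $\mathcal{X}$. Consequently, when two adjacent datasets $\mathcal{X}$ and $\hat{\mathcal{X}}$ differ in a single record, that record carries a single class label $j$, so $\mathcal{X}_i=\hat{\mathcal{X}}_i$ for every $i\neq j$, while $\mathcal{X}_j$ and $\hat{\mathcal{X}}_j$ are themselves adjacent. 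This is precisely why the privacy parameters do \emph{not} degrade to $(n\epsilon,n\delta)$ as they would under naive sequential composition: only one of the $n$ mechanisms ever sees a changed input.

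Concretely, I would first fix adjacent $\mathcal{X},\hat{\mathcal{X}}$ and identify the single affected class $j$ as above. Writing $\mu_i$ (resp.\ $\hat\mu_i$) for the output distribution of $\mathcal{M}_i$ on $\mathcal{X}_i$ (resp.\ $\hat{\mathcal{X}}_i$), the independence of the $n$ training/sampling procedures lets me factor the joint law of $\mathcal{G}_n$ as a product measure $\mu_1\otimes\cdots\otimes\mu_n$, where $\mu_i=\hat\mu_i$ exactly for every $i\neq j$. Then for any measurable output set $S\subseteq\prod_{i=1}^n\mathcal{X}'_i$, I would integrate out the $n-1$ unchanged coordinates first: for each fixed value of the coordinates $(x_i)_{i\neq j}$, apply the single-model $(\epsilon,\delta)$ guarantee assumed for $\mathcal{M}_j$ (established in Theorem~\ref{Thm_VAE}) to the corresponding slice of $S$ in the $j$-th coordinate, obtaining a bound of the form $\Pr[\mathcal{M}_j(\mathcal{X}_j)\in S_{\text{slice}}]\leq e^\epsilon\Pr[\mathcal{M}_j(\hat{\mathcal{X}}_j)\in S_{\text{slice}}]+\delta$. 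Reassembling via the product measure and using $\mu_i=\hat\mu_i$ for $i\neq j$, the $e^\epsilon$ term collapses to $e^\epsilon\Pr[\mathcal{G}_n(\hat{\mathcal{X}})\in S]$, while the residual $\delta$ is multiplied only by the total mass of the unchanged coordinates, which is at most $1$; this yields the additive $\delta$ and completes the inequality.

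The step I expect to be the main obstacle is the measure-theoretic bookkeeping for a \emph{general} (non-product) set $S$: the argument is transparent for rectangles $S=\prod_i S_i$, but one must justify the slicing/Fubini step and the fiber-by-fiber application of the DP inequality for arbitrary measurable $S$. The clean way around this is to phrase everything through the product measure and its conditional (slice) measures, so that the single-model guarantee is applied pointwise in $(x_i)_{i\neq j}$ and then integrated. I would also state explicitly the independence assumption on the per-class mechanisms, since the factorization of the joint law relies on it; given the construction (separate models trained and sampled independently), this is immediate. Finally, since the bound holds for each fixed $n$ and the affected class $j$ is arbitrary, the claim follows for every integer $n$, as stated.
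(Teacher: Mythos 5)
Your proposal is correct and follows essentially the same route as the paper's own proof: parallel composition, exploiting that the per-class training sets partition the private data (so adjacency touches exactly one class), together with independence of the $n$ mechanisms to factor the joint output law; the unchanged factors carry the $e^{\epsilon}$ term through, and since each is a probability at most $1$, the additive term stays a single $\delta$. The one substantive difference is rigor, and it favors you. The paper writes $\Pr[\mathcal{G}_n(\mathcal{X}_1)\in S]=\prod_{i=1}^{n}\Pr[\mathcal{M}_i(x_i^{1})\in S]$ for an \emph{arbitrary} subset $S$ of the product space, which is an abuse of notation: that factorization is only literally meaningful for product (rectangle) sets $S=\prod_i S_i$, and an $(\epsilon,\delta)$ bound verified on rectangles does not automatically extend to the generated $\sigma$-algebra, because the set function $e^{\epsilon}Q(\cdot)+\delta-P(\cdot)$ is not additive (a countable union of rectangles would naively accumulate one $\delta$ per piece). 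Your Fubini/slicing argument --- apply the $(\epsilon,\delta)$ guarantee of the affected model $\mathcal{M}_j$ to each fiber of $S$ over the unchanged coordinates, then integrate against the identical product of the other laws --- is exactly what is needed to make the claim airtight for general measurable $S$, at the cost of stating the independence assumption explicitly (which you do). So: same decomposition, same key inequality, but your write-up closes a measure-theoretic gap that the paper's one-line factorization leaves open.
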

\begin{proof}
Given two adjacent datasets $\mathcal{X}_{1}$ and $\mathcal{X}_{2}=\mathcal{X}_{1}\bigcup\{b\}$, without loss of generalization, assume $b$ belongs to class~$k\ (1 \leq k \leq n)$. Fix any subset of events $S \subseteq \Pi_{i=1}^{n} \mathcal{X}^{\prime}_{i}$. Since the $n$ generative models are pairwise independent, we obtain $\Pr[\mathcal{G}_n(\mathcal{X}_1)\in S]=\Pi_{i=1}^{n}\Pr[\mathcal{M}_i(x_{i}^{1})\in S]$, where $x_{i}^{1} \subseteq \mathcal{X}_{1}=\bigcup_{i=1}^{n}x_{i}^{1}$ denotes the training data of $\mathcal{X}_i$ for the $i$th generative model. Similarly, $\Pr[\mathcal{G}_n(\mathcal{X}_2)\in S]=\Pi_{i=1}^{n}\Pr[\mathcal{M}_i(x_{i}^{2})\in S]$. Since $\mathcal{X}_1$ and $\mathcal{X}_2$ only differ in $b$, we have $x_{i}^{1}=x_{i}^{2}$ and $\Pr[\mathcal{M}_i(x_{i}^{1})\in S]=\Pr[\mathcal{M}_i(x_{i}^{2})\in S]$, for any $i \neq k$. Since $\mathcal{M}_k$ is $(\epsilon,\delta)$-differentially private, then we have $\Pr[\mathcal{M}_k(x_{k}^{1})\in S]\leq e^{\epsilon}\Pr[\mathcal{M}_k(x_{k}^{2})\in S]+\delta$. Therefore, we obtain $\Pr[\mathcal{G}_n(\mathcal{X}_{1})\in S]=\Pi_{i=1}^{n}\Pr[\mathcal{M}_i(x_{i}^{1})\in S]=
\Pr[\mathcal{M}_1(x_{1}^{2})\in S]\times \cdots \times \Pr[\mathcal{M}_k(x_{k}^{1})\in S]\times \cdots \times \Pr[\mathcal{M}_n(x_{n}^{2})\in S] 
\leq e^{\epsilon}\Pi_{i=1}^{n}\Pr[\mathcal{M}_i(x_{i}^{2})\in S]+\delta
=e^{\epsilon}\Pr[\mathcal{G}_n(\mathcal{X}_{2})\in S]+\delta$. The inequality derives from the fact that any probability is no greater than $1$. Hence, $\mathcal{G}_n$ is $(\epsilon,\delta)$-differentially private, for any $n$.
\end{proof}

\noindent \textbf{Remark.}
Both \dpvae and \dpautogm can realize differentially private generative models w.r.t. the private data. The main difference is that \dpautogm requires users' own data (i.e., public data) to generate new data while \dpvae can generate infinite number of data points just based on Gaussian noise. Although the feature of \dpvae is pretty good, we do notice that the generated data quality is not always stable while \dpautogm is always stable in terms of utility. More details are presented in the evaluation section.

\section{Experimental Evaluation}
\label{sec:exp}
In this section, we first describe datasets used for evaluation, followed by the empirical results of two data generative models. 
Note that all the structures of generative models and machine learning model involved in the experiments are specified in Appendix~\ref{sec: model_arc}.

\subsection{Datasets}
\noindent \textit{\textbf{\mnist.}}
\mnist~\cite{lecun1998gradient} is the benchmark dataset containing handwritten digits from 0 to 9, comprised of 60,000 training and 10,000 test examples. Each handwritten grayscale image of digits is centered in a 28$\times$28 or 32$\times$32 image. To be consistent with ~\cite{hitaj2017deep}, we choose to use the 32$\times$32 version of \mnist dataset when evaluating our generative models against the GAN-based attack.

\noindent \textit{\textbf{Adult Census Data.}}
The Adult Census Dataset~\cite{adult} includes 48,843 records with 14 sensitive attributes, including gender, education level, marital status, and occupation. This dataset is commonly used to predict whether an individual makes over 50K dollars in a year. 32,561 records serve as a training set and 16,282 records are used for testing.

\noindent \textit{\textbf{\hospital.}}
This dataset is based on the Public Use Data File released by the Texas Department of State Health Services in 2010Q1~\cite{hospital}. Within the data, there are personal sensitive information, such as gender, age, race, length of stay, and surgery procedure. We focus on the 10 most frequent main surgery procedures, and exploit part of categorical features to make inference for each patient. The resulting dataset has 186,976 records with 776 binary features. We randomly choose 36,000 instances as testing data and the rest serves as the training data.

\noindent \textit{\textbf{Malware Data.}}
To demonstrate the generality of the proposed models, we also include the Android mobile malware dataset~\cite{chen2016stormdroid} for diversity purposes. This dataset is previously used to determine whether an Android application is benign or malicious based on 142 binary features, such as user permission request. We randomly choose 3,240 instances as training data and 2,000 as testing data.

\subsection{Evaluation of \dpautogm}
\label{autogm}
In this subsection, we first show how \dpautogm performs in terms of utility under different privacy budgets on four datasets. To evaluate performance, for \mnist, we split the test data into two parts: 90\% is used as public data and the rest 10\% is used as a hold out to evaluate test performance as in~\cite{papernot2018scalable}. For \adult, \hospital, and \malware, the test data is evenly split into two halves: the first serves as public data and the second is used for evaluating test performance. All the training data is regarded as private data of which the privacy we aim to protect. Then we analyze how public data size influences \dpautogm on \mnist dataset and we also compare the learning efficacy between when only using public data for training and combining it with \dpautogm. In addition, we compare \dpautogm with some state-of-art differentially private learning methods.

\begin{figure*}[t]
  \centering
  \mbox{
   \subfloat[Accuracy of machine learning models trained on generated data by \dpautogm and pristine data (Baseline)
   under different levels of privacy on \mnist \label{accuracy_dp_mnist}]{\includegraphics[width=0.235\textwidth]{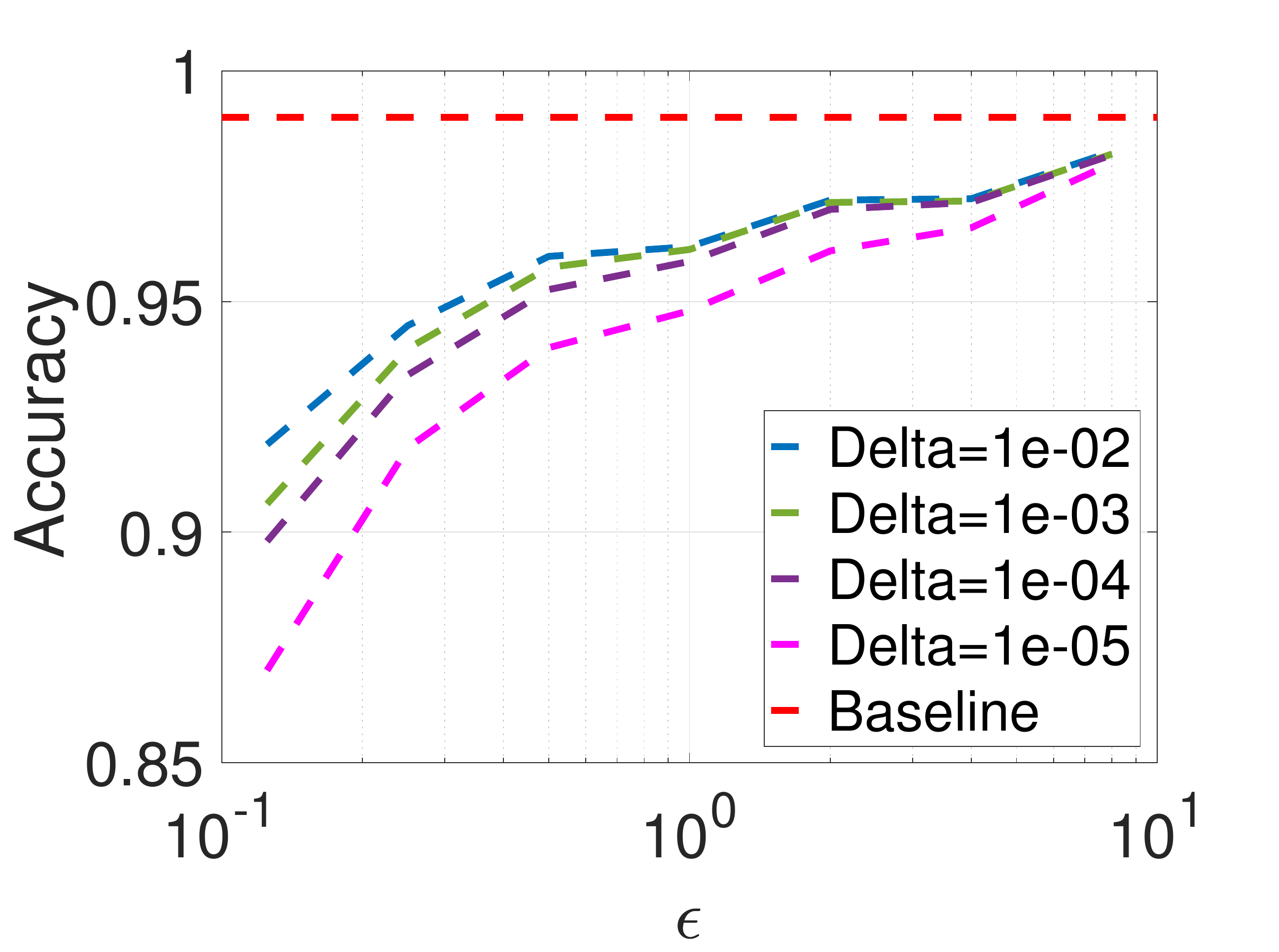}}
    \quad
      \subfloat[Accuracy of machine learning models trained on generated data by \dpautogm and pristine data (Baseline)
   under different levels privacy on \adult \label{accuracy_dp_adult}]{\includegraphics[width=0.235\textwidth]{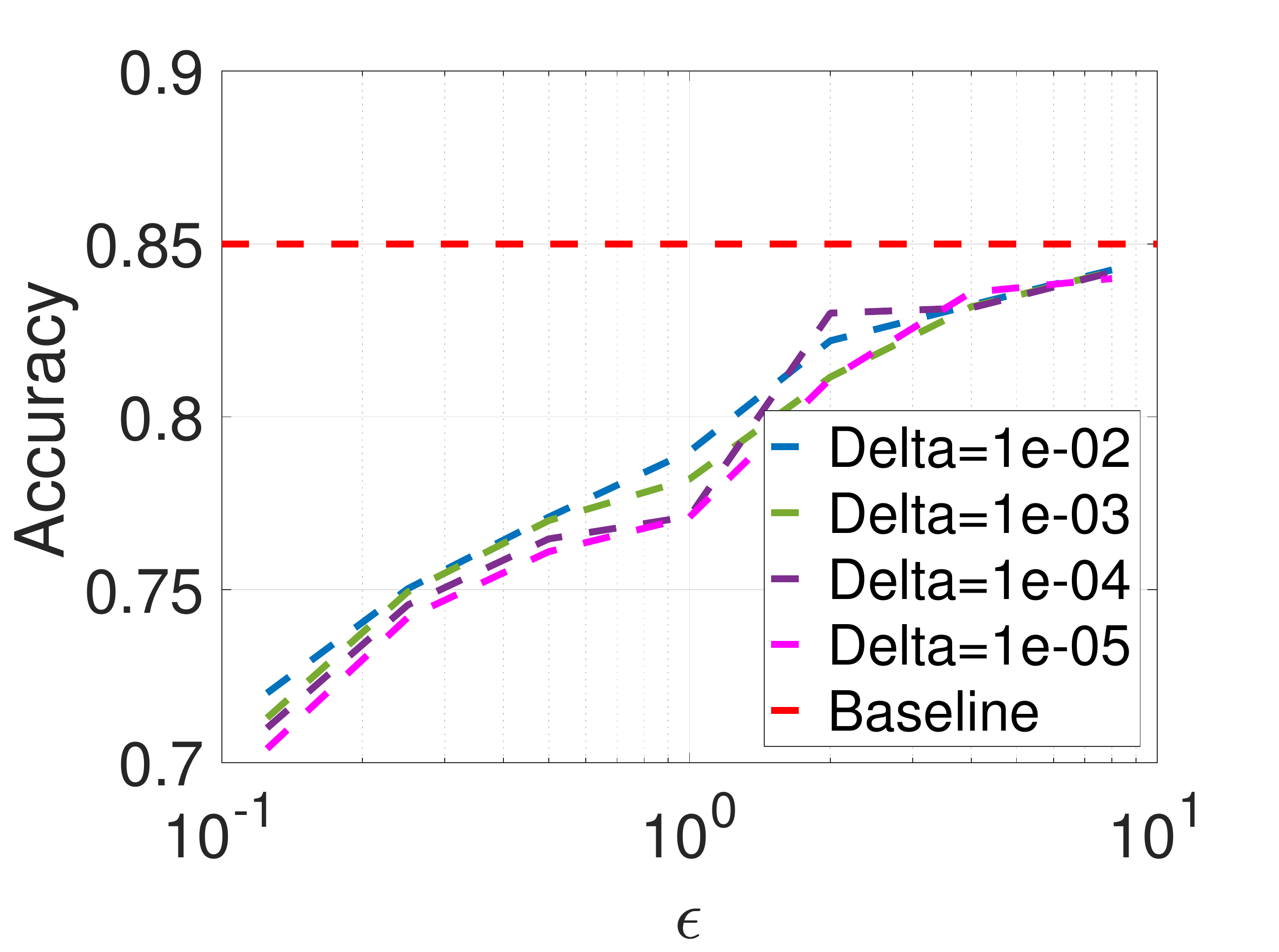}}
      \quad
     \subfloat[Accuracy of machine learning models trained on generated data by \dpautogm and pristine data (Baseline)
   under different levels privacy on \hospital \label{accuracy_dp_hospital}]{\includegraphics[width=0.235\textwidth]{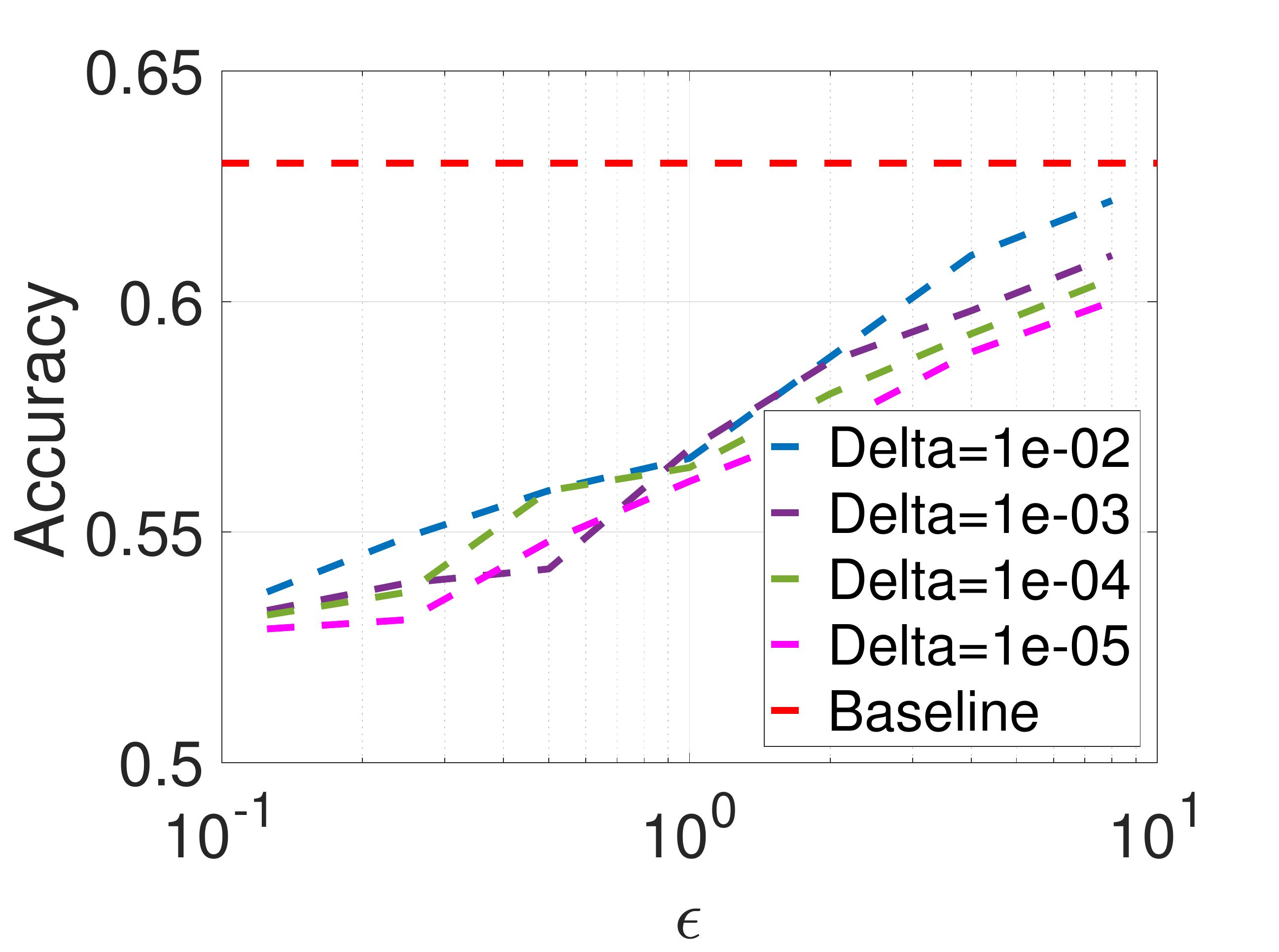}}
         \quad
        \subfloat[Accuracy of machine learning models trained on generated data by \dpautogm and pristine data (Baseline)
   under different levels privacy on \malware \label{accuracy_dp_malware}]{\includegraphics[width=0.235\textwidth]{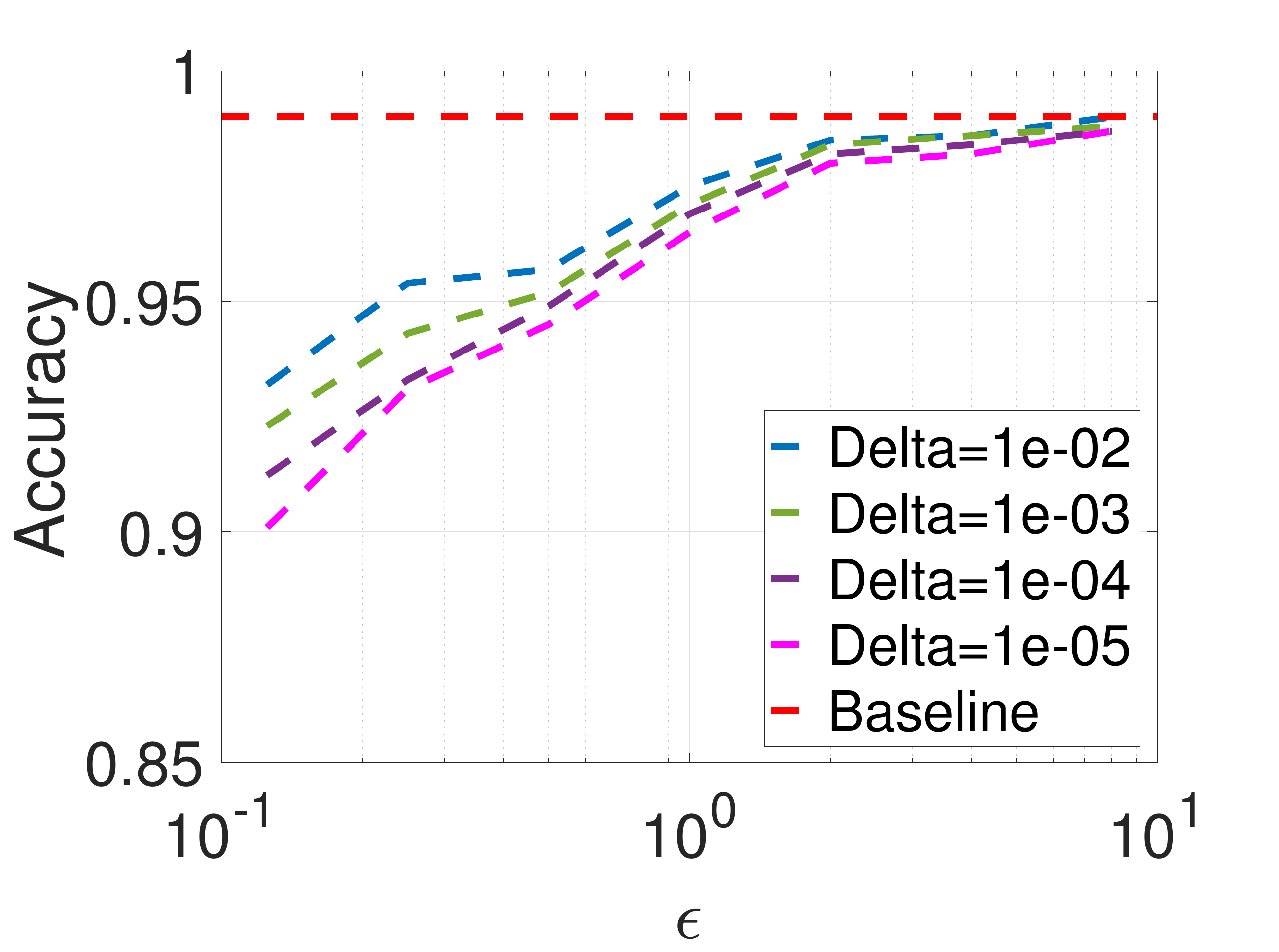}}
   }
  \caption{Evaluation of \dpautogm}
  \label{evaluation_dpaugm}
\end{figure*}

\begin{figure*}[t]
  \centering
  \mbox{
  \subfloat[Comparison between  \dpautogm and \dpdl on \mnist with $\delta=10^{-5}$ \label{dp_dpdl}]{\includegraphics[width=0.45\textwidth]{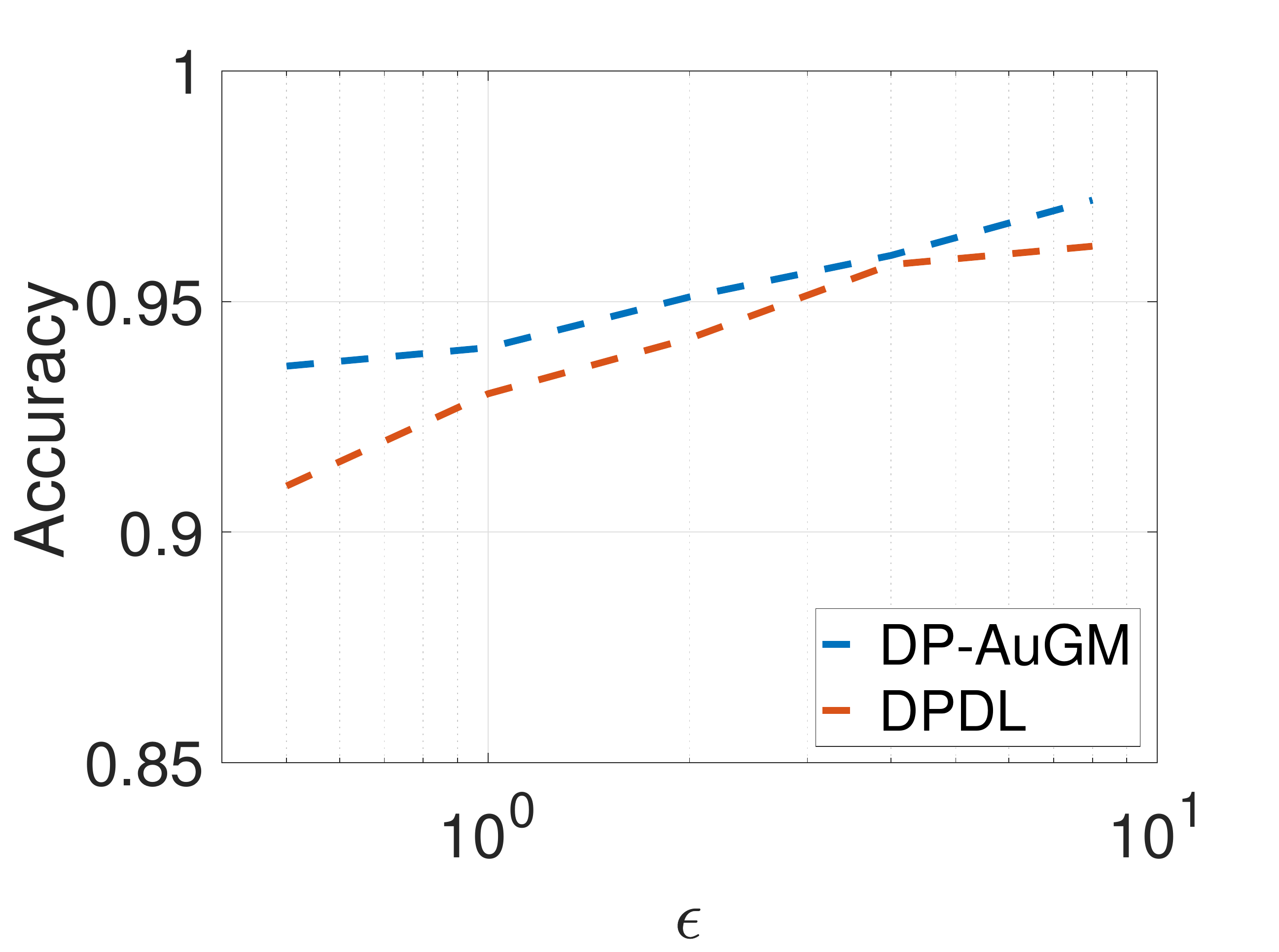}}
  \quad
   \subfloat[Comparison between \dpvae and \dpdl on \mnist with $\delta=10^{-5}$ \label{fig:vae_com}]{\includegraphics[width=0.45\textwidth]{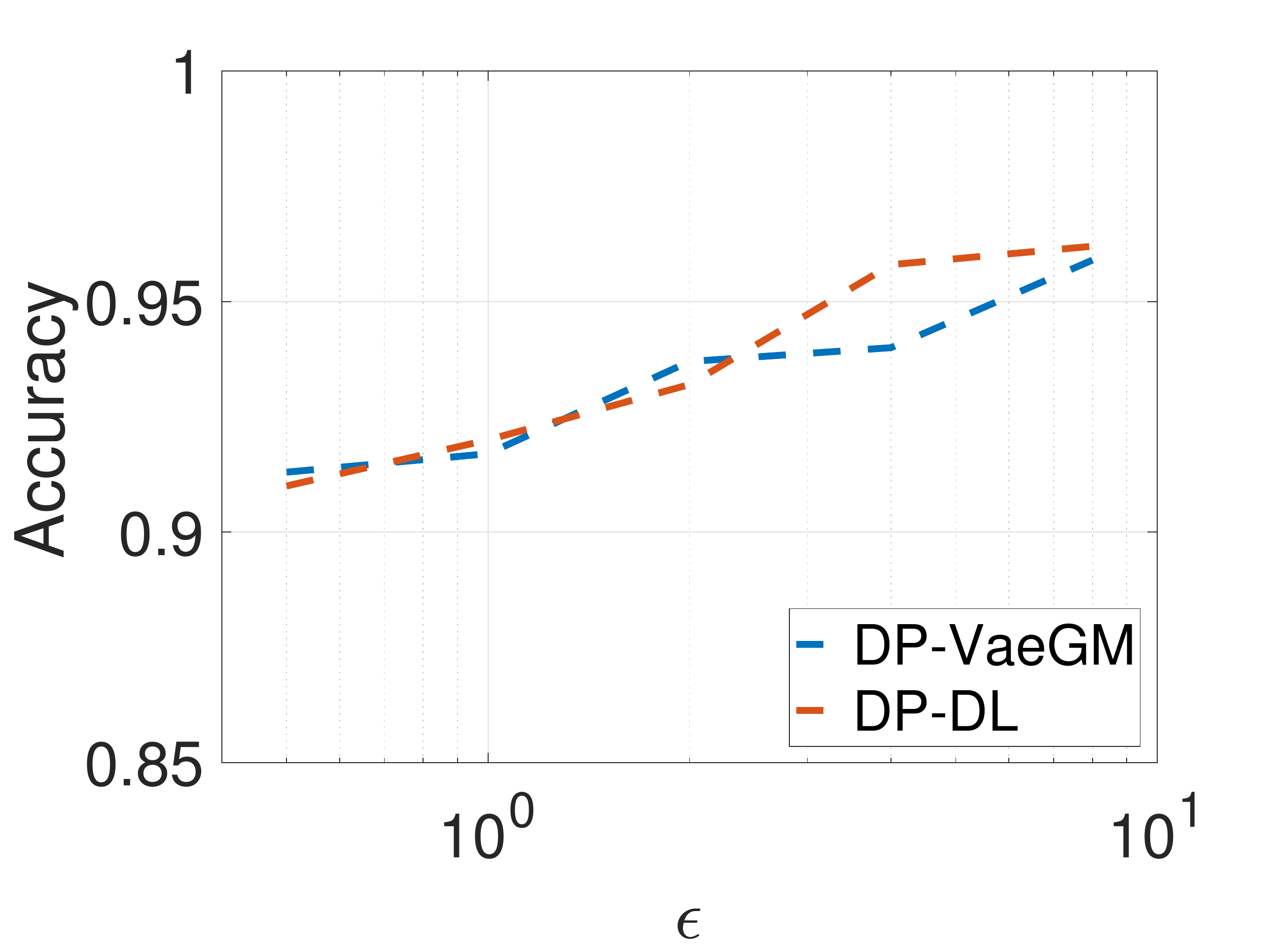}}
  }
  \caption{\dpautogm and \dpvae versus \dpdl}
  \label{evaluation_dpaugm_dpvae}
\end{figure*}

\noindent \textbf{Effect of Different Privacy Budgets.}
To evaluate the effects of privacy budgets (i.e., $\epsilon$ and $\delta$) on prediction accuracy for machine learning models,
we vary ($\epsilon$, $\delta$) to test learning efficiency (i.e., the utility metric) on different datasets. The results are shown in Figure~\ref{evaluation_dpaugm}(a)-(d). In these figures, each curve corresponds to the best accuracy achieved given fixed $\delta$, as $\epsilon$ varies between $0.2$ and $8$. In addition, we also show the baseline accuracy (i.e., without \dpautogm) on each dataset for the comparison. From Figure~\ref{evaluation_dpaugm}, we can see that the prediction accuracy decreases as the noise level increases ($\epsilon$~decreases), while we see \dpautogm can still achieve comparable utility with the baseline even when $\epsilon$ is tight (i.e., around~$1$). When $\epsilon=8$, for all the datasets, the accuracy lags behind the baseline within $3\%$. This demonstrates that data generated by \dpautogm can preserve high data utility for subsequent learning tasks.

\noindent \textbf{Efficacy of \dpautogm.}
We further examine how \dpautogm helps boost the learning efficacy. We compare the learning accuracy between only public data is used for training and by combining both \dpautogm and public data. For \dpautogm, we set the private budge $\epsilon$ and $\delta$ to be $1$ and $10^{-5}$, respectively. We do the comparisons on all the datasets and the result is presented in Table~\ref{tab:compare_dpaugm}. As we can see from Table~\ref{tab:compare_dpaugm}, after using \dpautogm, the learning accuracy increases by at least $6\%$ on all the datasets and  by $34\%$ on \malware dataset. This actually demonstrates the significance of using \dpautogm for sharing the information of private data. Since the amount of private data is huge, \dpautogm trained over the private data can better capture the inner representations of the dataset, which further boosts the following learning accuracy of machine learning models. In addition, we also examine how utility is affected when different amounts of public data is available on the dataset \mnist. We vary the public data size from 1,000 to 9,000 in steps of 1,000. The privacy budget $\epsilon$ and $\delta$ is set as $1$ and $10^{-5}$, respectively. As we can see from Figure~\ref{fig:augm_public}, the public data size affects test accuracy slightly, only within $7\%$ dropping. This suggests that private data plays a major role in generating high-utility data for learning efficacy.

\begin{figure}[t]
\centering
\includegraphics[scale=0.3]{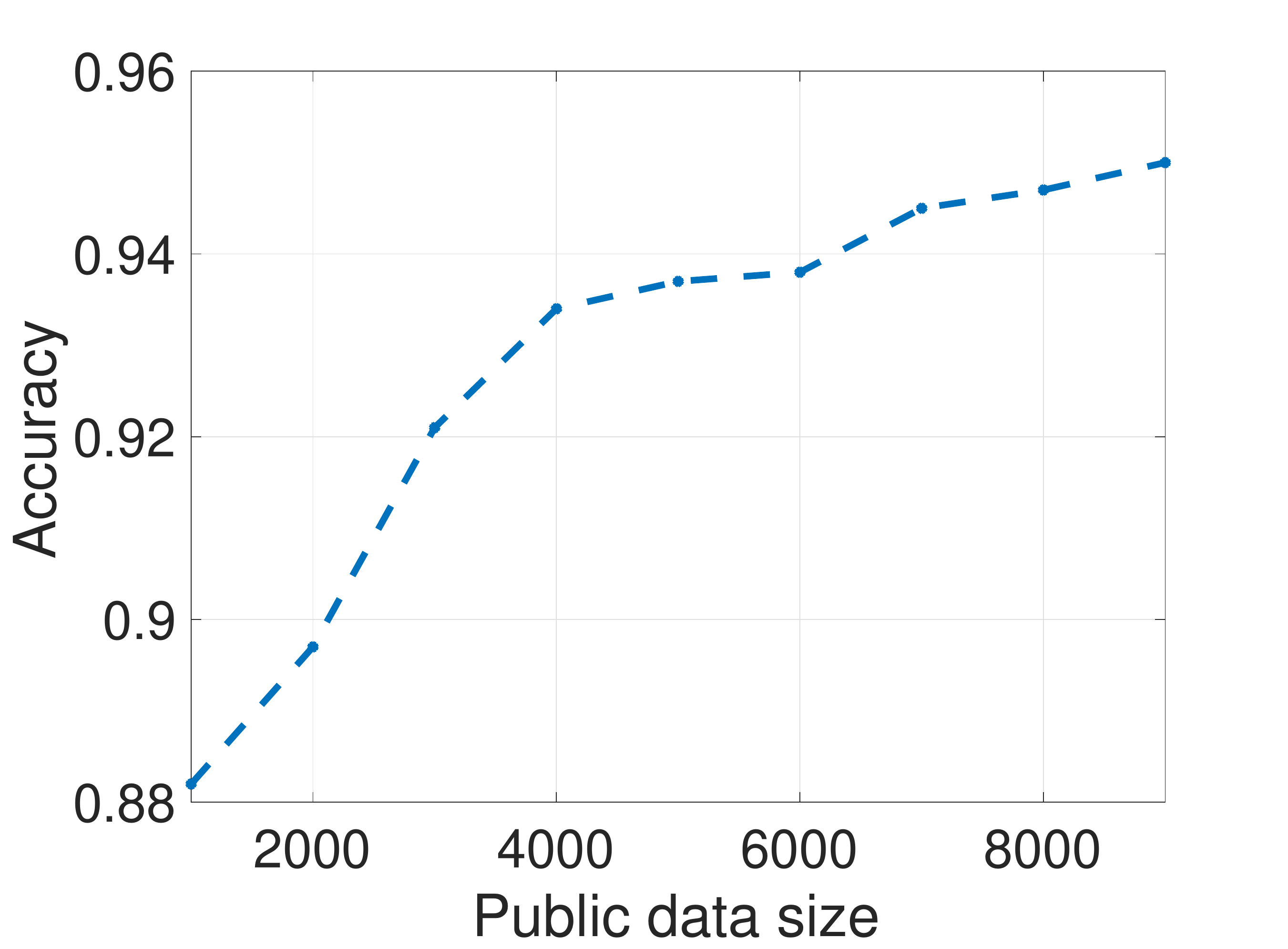}
\caption{Accuracy of \dpautogm by different sizes of public data}
\label{fig:augm_public}
\end{figure}

\begin{table}[t]
\caption{Comparisons of training accuracy between using only public data for training and using both \dpautogm and public data}
\label{tab:compare_dpaugm}
\resizebox{\linewidth}{!}{
\centering
\begin{tabular}{c|c|c}
  \toprule
  {Datasets}&{With \dpautogm}&{Without \dpautogm} \\
  \midrule
\mnist& 0.95 & 0.89\\
\adult & 0.78 &  0.64\\
\hospital & 0.56 &  0.42\\
\malware & 0.96 &  0.62\\
  \bottomrule
\end{tabular}}
\end{table}

\noindent \textbf{In Comparison with the Differentially Private Training Algorithm (\dpdl).} Although our method leverages \dpdl as the differentially private training algorithm, we show that our method better performs in training the machine learning model under the same privacy budget. For comparison, we choose the feed-forward neural network model with the architecture and \mnist dataset specified in~\cite{abadi2016deep}. 
In addition, we use 90\% of the test data as public data and the rest acts as the test data for both methods. For \dpdl, the public data simply serves as its training data. As for the privacy budget, we fix $\delta$ as $10^{-5}$ and vary $\epsilon$ from $0.5$ to $8$. The result is shown in Figure~\ref{evaluation_dpaugm_dpvae}(a). As we can see from Figure~\ref{evaluation_dpaugm_dpvae}(a), under different $\epsilon$, our method outperforms \dpdl consistently. Furthermore, \dpdl needs to be performed each time on a new model while \dpautogm only needs to be trained once. Then any model which is trained over the generated data from \dpautogm is differentially private w.r.t. the private data (i.e., with the same property of differential privacy achieved by \dpdl). Hence, we can see that \dpautogm outperforms \dpdl both in accuracy and computational efficiency.

\noindent \textbf{In Comparison with Scalable Private Learning with PATE.} Scalable Private Learning with PATE (\spate)~\cite{papernot2018scalable} is recently proposed by Papernot et al., which can also realize a differentially private training algorithm w.r.t. the private data and provides privacy protection for partial data. We try to compare \spate with \dpautogm on \mnist in terms of the utility metric. Here, the baseline denotes the scenario where no privacy protection mechanism is used. We follow~\cite{papernot2018scalable} to split the test data into two parts. One part serves as public data while the second serves as test data. We also use the same CNN machine-learning model as specified in~\cite{papernot2018scalable}. As we can see from Table~\ref{tab:compa}, \dpautogm outperforms \spate by 0.2\% in terms of prediction accuracy and only sits below the baseline by 0.5\%. Note that the reason of making a comparison at a specific pair of the privacy budget is that \spate~\cite{papernot2018scalable} only presents the result on~\mnist for a specific pair of differential privacy parameters. 
Furthermore, \dpautogm surpasses \spate in terms of computational efficiency since 250 teacher models are used in \spate while \dpautogm only needs to be trained once.

\begin{table}[t]
\caption{Comparisons between \dpautogm and \spate on \mnist}
\label{tab:compa}
\resizebox{\linewidth}{!}{
\centering
\begin{tabular}{c|c|c|c|c}
  \toprule
  {Models}&{ Privacy budget~$\epsilon$} &{Privacy budget~$\delta$} &{Accuracy}&{Baseline} \\
  \midrule
\spate~\cite{papernot2018scalable}& 1.97 & $10^{-5}$ & 0.985 &0.992\\
\dpautogm & 1.97 & $10^{-5}$ & 0.987 &0.992\\
  \bottomrule
\end{tabular}}
\end{table}

\subsection{Evaluation of \dpvae} 
In this subsection, we empirically evaluate utility performance of our proposed data generative model \dpvae. As VAE is usually used to generate high quality images, now we only evaluate \dpvae on the image dataset \mnist. 

\noindent \textbf{Effect of Different Privacy Budgets.}
We vary the privacy budget to test \dpvae on \mnist dataset. The result is shown in Figure~\ref{fig:vae_acc}, where each curve corresponds to the best accuracy given fixed $\delta$, and $\epsilon$ varies between $0.2$ and $8$. We show the baseline accuracy (i.e., without \dpvae) using the red line. From this figure, we can see that \dpvae can achieve comparable utility w.r.t. the baseline. For instance, when $\epsilon$ is greater than $1$, the accuracy is always higher than $92\%$. When $\epsilon$ is $8$ and $\delta$ is $10^{-2}$, the accuracy is over $97\%$ which is lower than the baseline by $2\%$. Thus, we can see that \dpvae has the potential to generate data with high training utility while providing privacy guarantees for private data. 

\begin{figure}[htbp]
\centering
\includegraphics[scale=0.3]{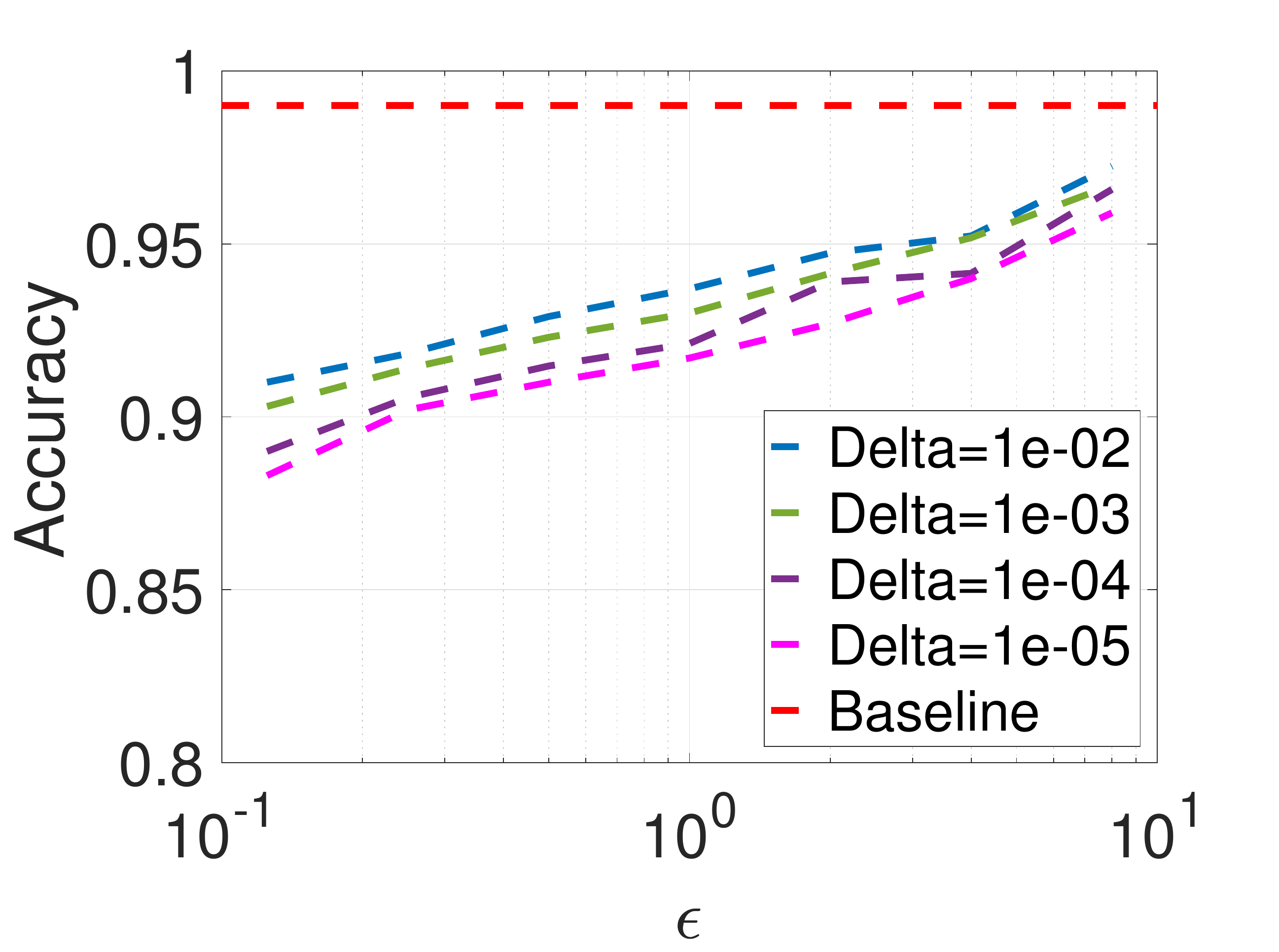}
\caption{Accuracy of \dpvae under various privacy budgets on \mnist dataset}
\label{fig:vae_acc}
\end{figure}

\noindent \textbf{In Comparison with the Differentially Private Training Algorithm (\dpdl).} We compare \dpvae with \dpdl on \mnist. As for the privacy budget, we fix $\delta$ as $10^{-5}$ and vary $\epsilon$ from $0.5$ to $8$. From Figure~\ref{evaluation_dpaugm_dpvae}(b), we can see that \dpvae achieves comparable utility with \dpdl. In addition, we want to stress that for \dpvae, once the data is generated, all machine learning models trained on the generated data will become differentially private w.r.t the private data while for \dpdl, we need to rerun the algorithm for each new model. Thus, \dpvae outperforms \dpdl in computation efficiency.

\noindent \textbf{In Comparison with Scalable Private Learning with PATE.} We also compare Scalable Private Learning with PATE (\spate)~\cite{papernot2018scalable} with \dpvae on \mnist in terms of the utility metric (i.e., prediction accuracy). The learning model applies the CNN structure as specified in~\cite{papernot2018scalable}. As \spate requires the presence of public data, we split the test data into two parts in the same way as specified by~\cite{papernot2018scalable}. Considering \dpvae does not need public data, private data is discarded for \dpvae. In addition, the privacy budget $\epsilon$ and $\delta$ is set to be $1.97$ and $10^{-5}$, respectively. From Table~\ref{tab:compa2}, we can see that \dpvae falls behind \spate by approximately 2\%. This is because that \spate trains the model using both public and private data while \dpvae is only trained with private data. 

\begin{table}[t]
\caption{Comparisons between \dpvae and \spate on \mnist}
\label{tab:compa2}
\resizebox{\linewidth}{!}{
\centering
\begin{tabular}{c|c|c|c}
  \toprule
  {Models}&{ Privacy budget~$\epsilon$} &{Privacy budget~$\delta$} &{Accuracy} \\
  \midrule
\spate~\cite{papernot2018scalable}& 1.97 & $10^{-5}$ & 0.985 \\
\dpvae & 1.97 & $10^{-5}$ & 0.968 \\
  \bottomrule
\end{tabular}}
\end{table}

\noindent \textbf{Remark.} We have empirically shown that \dpautogm and \dpvae can achieve high data utility and protect privacy of private data at the same time.

\section{Defending against Existing Attacks}
\label{sec:attacks}
To demonstrate the robustness of proposed generative models, here we evaluate the models against three state-of-the-art privacy violation attacks---model inversion attack, membership inference attack, and the GAN-based attack against collaborative deep learning. 

\subsection{Model Inversion Attack}
We choose to use the one-layer neural network to mount the model inversion attack~\cite{ccs15} over \mnist dataset setting because it is easier to check the effectiveness of the model inversion attack on image dataset. Note that Hitaj et al.~\cite{hitaj2017deep} claimed that the model inversion attack might not work on convolutional neural networks (CNN). For the original attack, we use all the training data to train the one-layer neural network and then try to recover digit 0 by exploiting the confidence values~\cite{ccs15}.  As we can see from Figure~\ref{fig:f1}, the digit 0 is almost recovered. Then, we try to evaluate how \dpautogm performs in defending against the attack. We use the generated data from \dpautogm to train the one-layer neural network. The privacy budget $\epsilon$ and $\delta$ for \dpautogm is set to be $1$ and $10^{-5}$, respectively. We then mount the same model inversion attack on the one-layer neural network. Figure~\ref{fig:f2} shows the result after deploying \dpautogm. We can clearly see that after deploying \dpautogm, nothing can be learned from the attack result as shown in Figure~\ref{fig:f2}. So we can see \dpautogm can mitigate the model inversion attack effectively. However, we find that \dpvae is not robust enough in mitigating the model inversion attack. We will discuss this in Section~\ref{discuss}.

\begin{figure}[t]
  \centering
  \mbox{
  \subfloat[Original attack]{\includegraphics[width=0.293\textwidth]{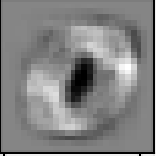}\label{fig:f1}}
   \quad
  \subfloat[Attack with \dpautogm]{\includegraphics[width=0.3\textwidth]{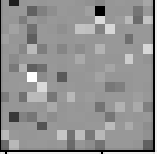}\label{fig:f2}}
  }
  \caption{The efficiency of the model inversion attack on \mnist dataset before and after deploying \dpautogm}
  \label{modelinver}
\end{figure}

\subsection{Membership Inference Attack}

\noindent We evaluate how \dpautogm and \dpvae perform in mitigating membership inference attack on \mnist using one-layer neural networks. The training set size is set to be 1,000 and the number of shadow models~\cite{shokri2016membership} is set to be 50. We have set the privacy budget $\epsilon$ and $\delta$ to be $1$ and $10^{-5}$, respectively. For this attack, we mainly consider whether this attack can predict the existence of private data in the training set. To evaluate the attack, we use the standard metric---precision, as specified in~\cite{shokri2016membership} that the fraction of the records inferred as members of the private training dataset that are indeed members. The result is shown in Figure~\ref{member_mnist}. As we can see from Figure~\ref{member_mnist}, after deploying \dpautogm, the attack precision for all the classes drops at least 10\% and for some classes, the attack precision is approaching zero, such as classes 2 and 5. Similarly for \dpvae, the attack precision drops over 20\% for all the classes. Thus, we conclude that, with \dpautogm and \dpvae, the membership inference attack can be effectively defended against. The privacy loss on \mnist is also tabulated in Table~\ref{pl_mem}. As we can see, with our proposed generative models, the privacy loss for each class can be reduced to zero. 

\begin{figure}[t]
  \centering
   \includegraphics[width=0.9\textwidth]{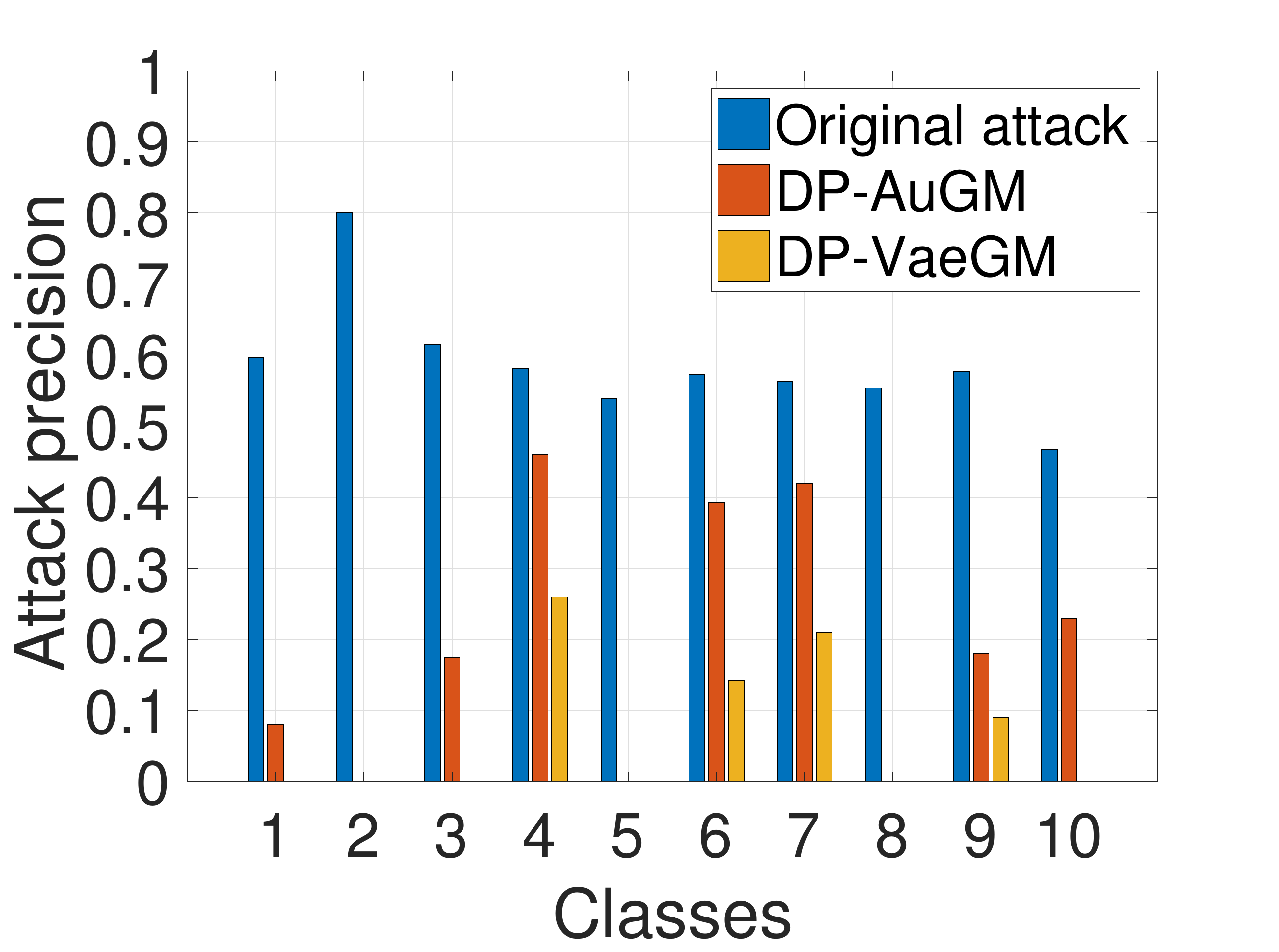}
  \caption{Evaluation of \dpautogm, and \dpvae against the membership inference attack on \mnist}
  \label{member_mnist}
\end{figure}

\begin{table}[t]\scriptsize
\caption{Privacy loss for the membership inference attack}
\label{pl_mem}
\resizebox{\linewidth}{!}{
\centering
\begin{tabular}{l|c|c|c|c|c|c|c|c|c|c}
  \toprule
     {\bf Original attack (\mnist)}&0.2&0.6&0.2&0.2&0.1&0.2&0.1&0.1&0.2&0.0\\
 \midrule
  {\bf With \dpautogm} & 0.0&0.0&0.0&0.0&0.0&0.0&0.0&0.0&0.0&0.0
  \\
  \midrule
    {\bf With \dpvae} & 0.0&0.0&0.0&0.0&0.0&0.0&0.0&0.0&0.0&0.0
    \\
   \bottomrule
\end{tabular}}
\end{table}

\subsection{GAN-based Attack against Collaborative Deep Learning}
We choose to use the \mnist dataset to analyze GAN-based attack since the simplicity of the dataset can boost the success rate for the attacker. We create two participants in this setting, where one serves as an adversary and the other serves as an honest user, as suggested in~\cite{hitaj2017deep}. We follow the same model structure as specified in~\cite{hitaj2017deep}, where the CNN is used as a discriminator and the DCGAN~\cite{radford2015unsupervised} is used as a generator. Users can apply the proposed differentially private generated data or original data to train their local models.
We show defense results for \dpautogm in Figures~\ref{federated_picture_a} and~\ref{federated_picture_b}, where Figure~\ref{federated_picture_a} represents the images obtained by adversaries without deploying generative models, while Figure~\ref{federated_picture_b} shows the obtained images which have been protected by \dpautogm. As we can see from Figure~\ref{federated_picture_b}, the proposed model \dpautogm significantly thwarts the attacker's attempt to recover anything from the private data. However, similar with the results from model inversion attack, \dpvae is not robust enough to defend against this attack. We will also discuss in detail in Section~\ref{discuss}.

\subsection{Discussion}
\label{discuss}
Although both \dpvae and \dpautogm are differentially private generative models, the results show that \dpautogm is robust against all the attacks while \dpvae can only defend against the membership inference attack. The main difference between these two models is that \dpautogm uses the output of the encoder (a part of the autoencoder) as the generated data while \dpvae uses the output of the VAE. As the encoder functions can reduce the dimensions of the input data, we can envision that this operation will incur a big norm distance between the input data and the generated data in \dpautogm. 
Considering the model inversion attack and GAN attack both target at recovering part of the training data of a model, the best result on \dpautogm will be successfully recovering those encoded data while for \dpvae, the result will be recovering all. 
Therefore, it seems that the key to defend against these two attacks is not only differential privacy, but also the appearance of the generated data. This is also mentioned by Hitaj et al.~\cite{hitaj2017deep}, as they asserted that differential privacy is not effective in mitigating the developed GAN attack because differential privacy is not designed to solve such a problem. Differential privacy in deep learning targets at protecting the specific elements of training data, while the goal of these two attacks is to construct a data point which is similar to the training data. Even if the attacks are successful, differential privacy is not violated since the specific data points are not recovered.

\begin{table}[htbp]
\caption{Robustness of privacy preserving learning methods against different privacy attacks}
\label{tab:nearby}
\resizebox{\linewidth}{!}{
\centering
\begin{tabular}{c|c|c|c}
  \toprule
{Models} & {Model Inversion} & { Membership } &{GAN-based Attack against} \\
{} & {Attack}&{Inference Attack} &{Collaborative Deep Learning}\\
 \midrule
{\bf \dpautogm} & $\CIRCLE$ & $\CIRCLE$ & $\CIRCLE$ \\
{\bf \dpvae} & $\Circle$ & $\CIRCLE$ & $\Circle$\\
{\bf \dpdl} & $\Circle$ & $\CIRCLE$ & $\Circle$\\
{\bf \spate} & $\Circle$  & $\CIRCLE$ & $\Circle$\\
   \bottomrule
\end{tabular}}
  \centering {\small $\CIRCLE$: Robust~~$\Circle$: Not Robust}
\end{table}

\noindent \textbf{Remark.} Extensive experiments have shown that \dpautogm can mitigate all the three attacks. \dpvae is only robust against the membership inference attack (see Table~\ref{tab:nearby}).

\begin{figure*}[htb]
\centering
  \mbox{
    \subfloat[]{\includegraphics[height=.07\textwidth]{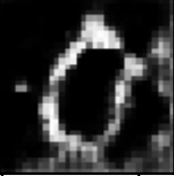}} \quad
    \subfloat[]{\includegraphics[height=.07\textwidth]{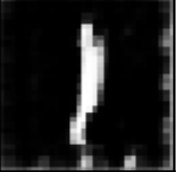}}  \quad
    \subfloat[]{\includegraphics[height=.07\textwidth]{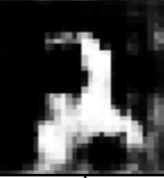}} \quad
    \subfloat[]{\includegraphics[height=.07\textwidth]{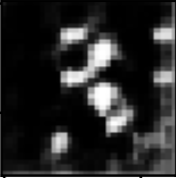}}   \quad
    \subfloat[]{\includegraphics[height=.07\textwidth]{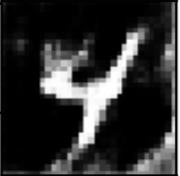}} \quad
     \subfloat[]{\includegraphics[height=.07\textwidth]{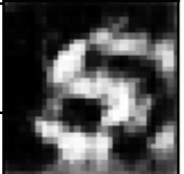}}  \quad
     \subfloat[]{\includegraphics[height=.07\textwidth]{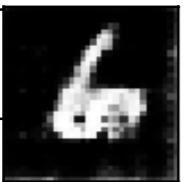}}   \quad
     \subfloat[]{\includegraphics[height=.07\textwidth]{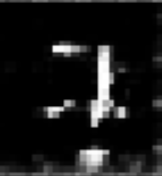}}   \quad
     \subfloat[]{\includegraphics[height=.07\textwidth]{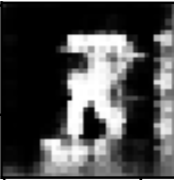}}   \quad
     \subfloat[]{\includegraphics[height=.07\textwidth]{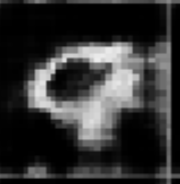}}   
     }
       \caption{With the original attack (images generated by the GAN-based attack against collaborative deep learning on the MNIST dataset)}
 \label{federated_picture_a}
   \vspace{-0.2cm}
\end{figure*}

\begin{figure*}[htb]
\centering
   \vspace{-0.5cm}
 \mbox{
      \subfloat[]{\includegraphics[height=.07\textwidth]{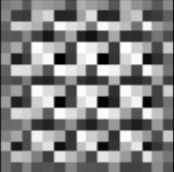}} \quad
     \subfloat[]{\includegraphics[height=.07\textwidth]{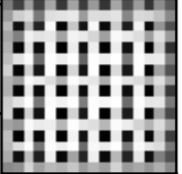}} \quad 
     \subfloat[]{\includegraphics[height=.07\textwidth]{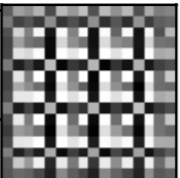}} \quad
     \subfloat[]{\includegraphics[height=.07\textwidth]{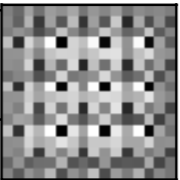}} \quad
     \subfloat[]{\includegraphics[height=.07\textwidth]{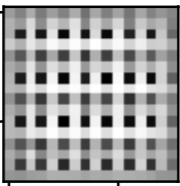}} \quad
     \subfloat[]{\includegraphics[height=.07\textwidth]{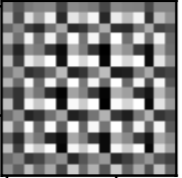}} \quad
    \subfloat[]{\includegraphics[height=.07\textwidth]{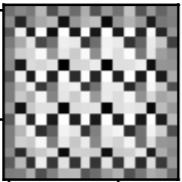}} \quad
     \subfloat[]{\includegraphics[height=.07\textwidth]{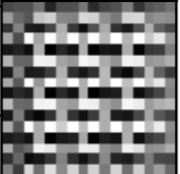}} \quad
    \subfloat[]{\includegraphics[height=.07\textwidth]{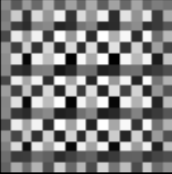}} \quad
     \subfloat[]{\includegraphics[height=.07\textwidth]{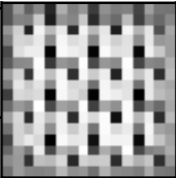}} 
     }
  \caption{With \dpautogm (images generated by the GAN-based attack against collaborative deep learning on the MNIST dataset)}
  \label{federated_picture_b}
    \vspace{-0.2cm}
\end{figure*}

\section{Deploying Data Generative Models on Real-World Applications}

To demonstrate the applicability of \dpautogm and \dpvae, we will show how they can be easily integrated with Machine Learning as a Service (MLaaS) commonly supported by major Internet companies and federated learning supported by Google. We integrate \dpautogm with both MLaaS and federated learning over all the datasets. We mainly focus on the utility performance of \dpautogm when integrated with federated learning, since federated learning is threatened by the GAN-based attack but can be effectively defended against by \dpautogm. We integrate \dpvae with MLaaS alone and evaluate it on the image dataset \mnist, as currently VAEs are usually used for generating images.

\subsection{Machine Learning as a Service}
MLaaS platforms are cloud-based systems that provide simple APIs as a web service for users who are interested in training and querying machine learning models. For a given task, a user first submits the private data through a web page interface or an mobile application created by developers, and selects the features for the task. Next, the user chooses a machine learning model from the platform, tunes the parameters of the model, and finally obtains the trained model. All these processes can be completed inside the mobile application. However, the private data submitted by innocent users can be maliciously exploited if the platform is compromised, which raises serious privacy concerns. In this paper, our \dpautogm and \dpvae can serve as a data privacy protection module to protect privacy of the private data. To this end, users can first build \dpautogm or \dpvae locally, train the generative models with the private data, and then upload the generated data for later training. As we will show in the experiment, this will incur negligible utility loss for training, while significantly protecting data privacy. With \dpautogm and \dpvae, even if these platforms are compromised, the privacy of sensitive data can still be preserved. In addition, we will show that training \dpautogm and \dpvae locally requires only a few computational resources.

When applying the proposed \dpautogm and \dpvae to MLaaS, we choose to examine three mainstream MLaaS platforms, which are Google Prediction API~\cite{google}, Amazon Machine Learning~\cite{amazon}, and Microsoft Azure Machine Learning~\cite{microsoft}. 
We then set the differential privacy budget $\epsilon$ and $\delta$ to be $1$ and $10^{-5}$, respectively, for \dpvae and \dpautogm. Similar with the evaluation section, we regard all the training data as private data and for \dpautogm, we split the test data the same way as we do in Section~\ref{autogm}. As we can see from Figure~\ref{accuracy_application}, using the generated data by \dpautogm for training, we can achieve comparatively high accuracy (accuracy deteriorating within 8\%) on all three platforms for all datasets. Strikingly, we find that the model trained with generated data sometimes even outperforms the one trained with original data (see trained models on Amazon over \mnist). For \dpvae, the result is shown in Figure~\ref{service_mnist}. We can see that \dpvae can achieve comparable utility (accuracy deteriorating within 3\%) on all the three platforms on \mnist. This clearly shows that \dpvae and \dpautogm have the potential to be well integrated into MLaaS platforms and provide privacy guarantees for users' private data and retain high data utility at the same time.

Furthermore, we show the time cost of training \dpautogm (58.2s) and \dpvae (27.9s) under 10 epochs on \mnist dataset. 
The evaluation is done with Intel Xeon CPU with 2.6GHZ, GPU of GeForce GTX 680, Ubuntu 14.04 and Tensorflow. 
Most recently, Tensorflow Mobile~\cite{mobile} has been proposed to deploy machine learning algorithms on mobile devices. We, therefore, believe it will cost much less to train such generative models locally on mobile devices.

\begin{figure*}[t]
  \centering
  \mbox{
  \subfloat[MNIST (MLaaS)\label{service_mnist}]{\includegraphics[width=0.18\textwidth]{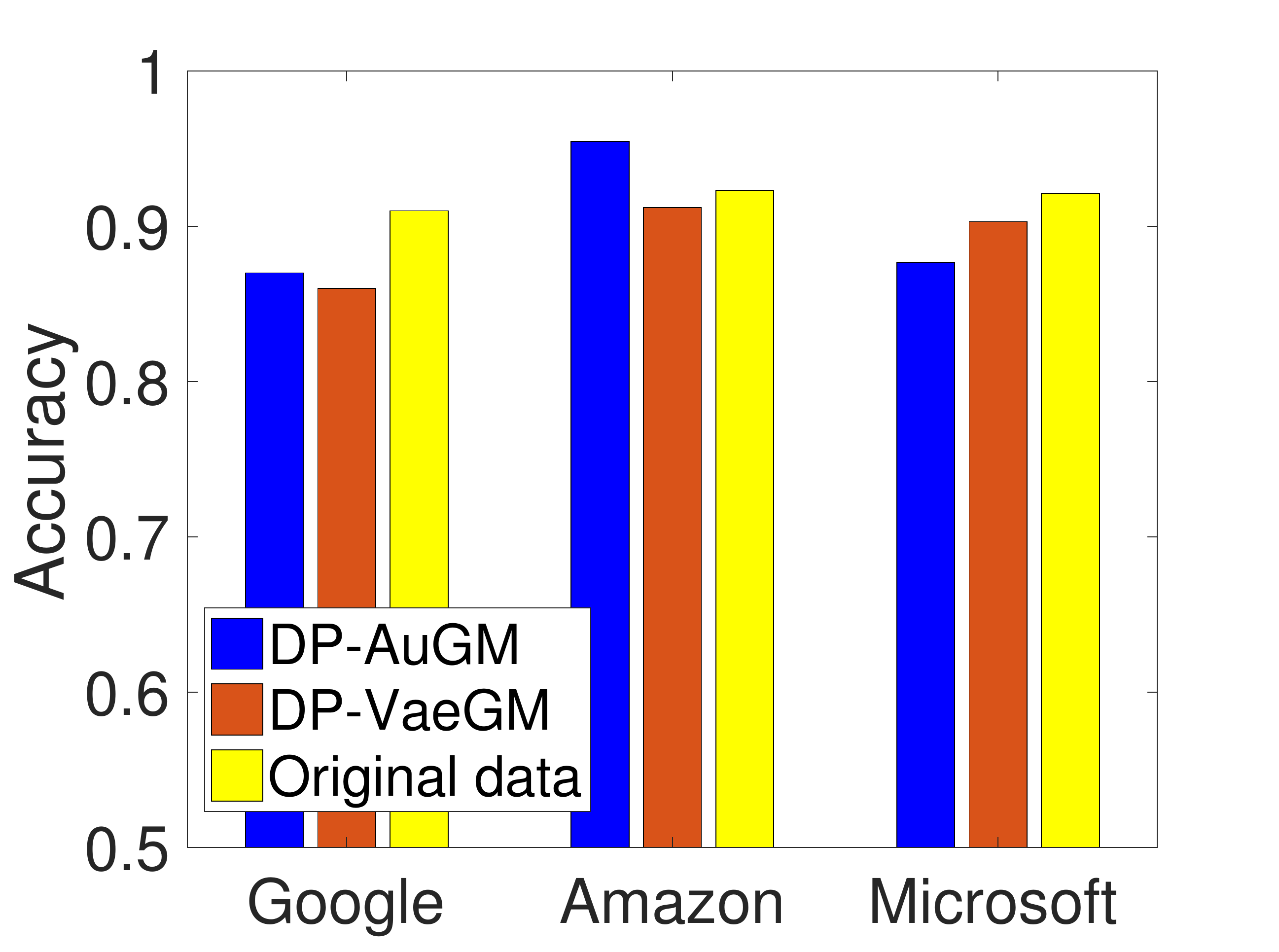}}
\quad
   \subfloat[\adult (MLaaS)\label{service1}]{\includegraphics[width=0.18\textwidth]{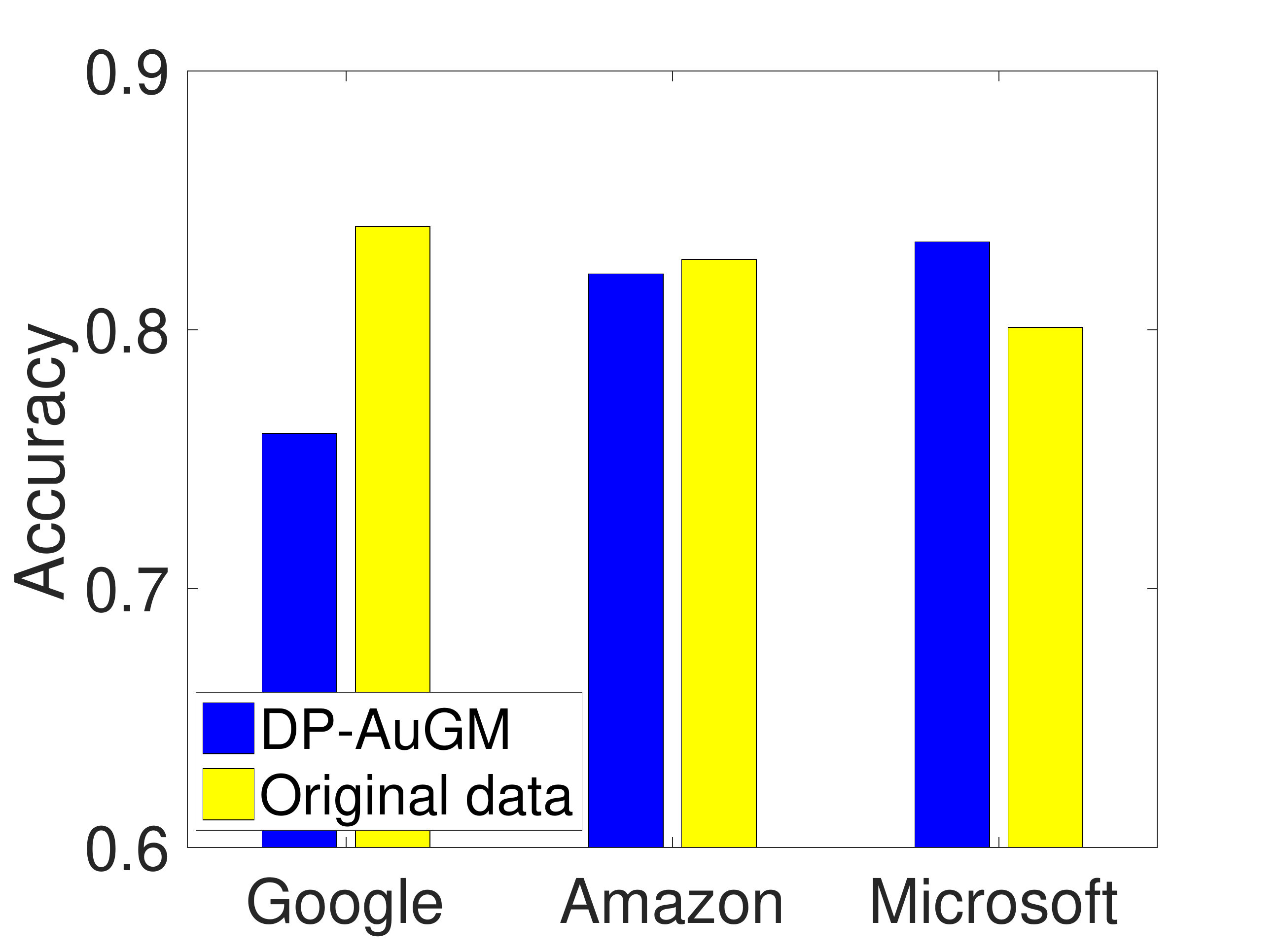}}
\quad
     \subfloat[\hospital (MLaaS)\label{service2}]{\includegraphics[width=0.18\textwidth]{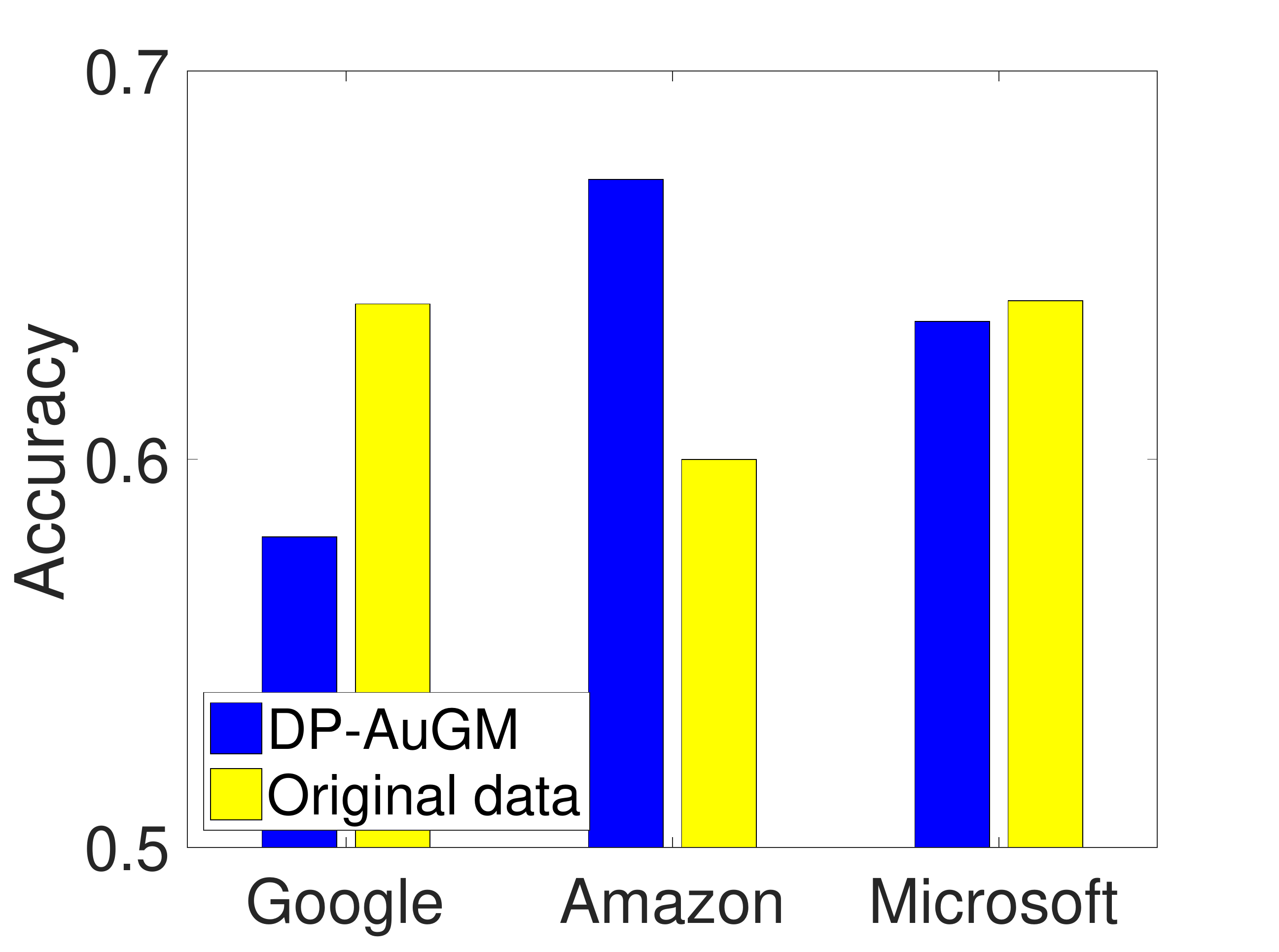}}
\quad
     \subfloat[\malware (MLaaS)\label{service3}]{\includegraphics[width=0.18\textwidth]{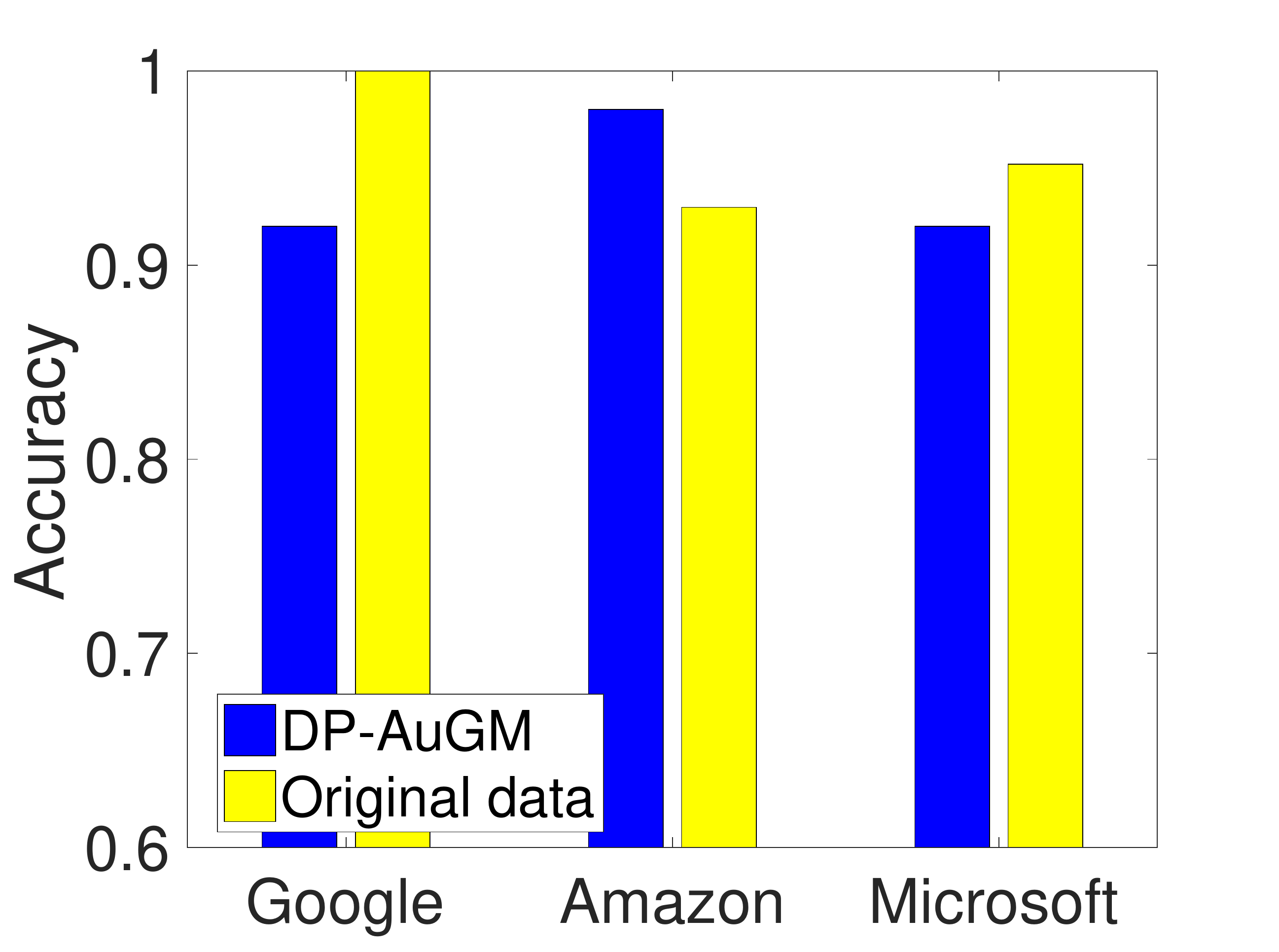}}
  \quad
            \subfloat[Accuracy on four datasets (federated learning) \label{service_fe}]{\includegraphics[width=0.18\textwidth]{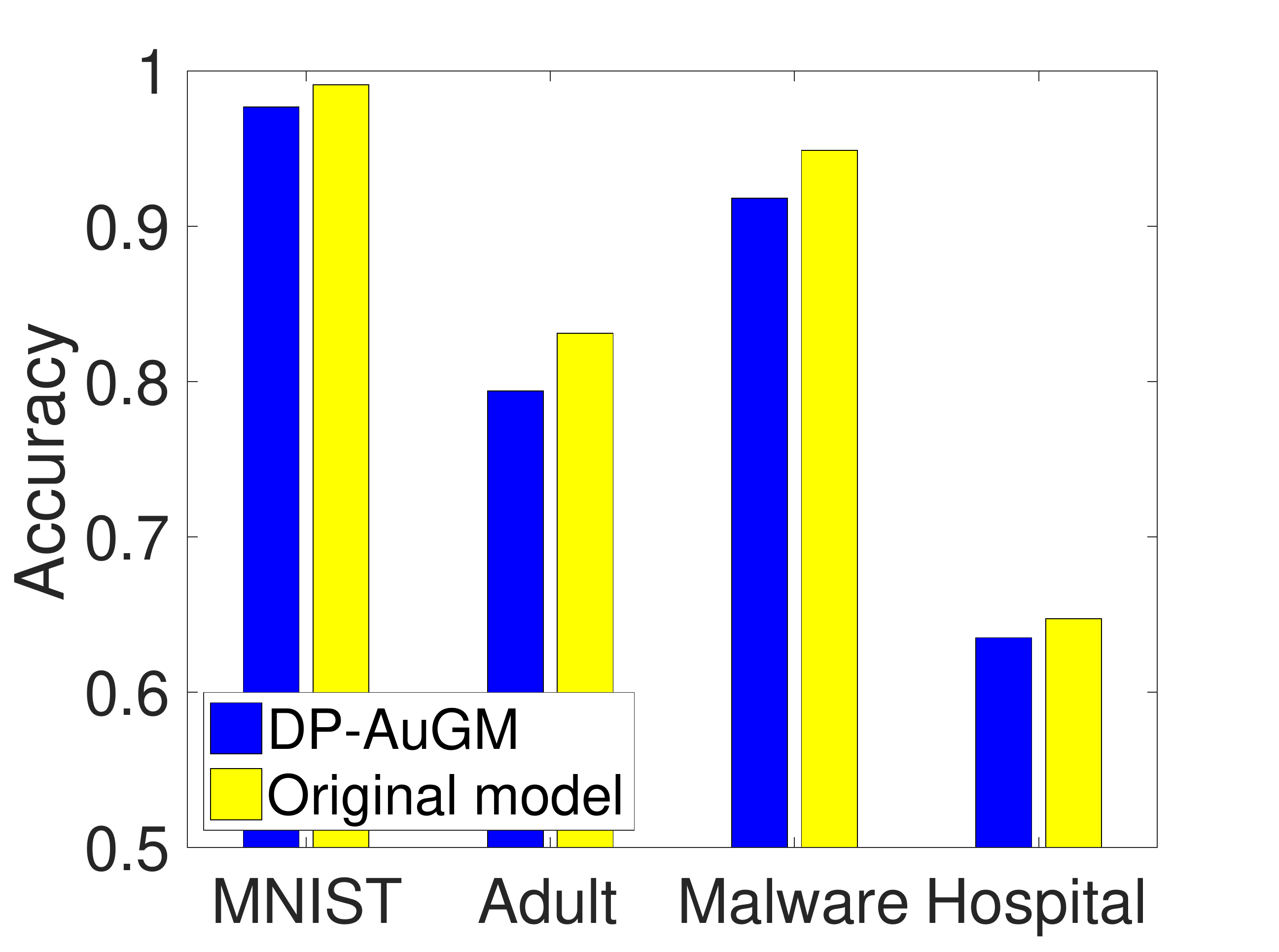}}
            }
  \caption{Accuracy of trained models when integrating proposed generative models with MLaaS and federated learning}
  \label{accuracy_application}
  \vspace{-0.2cm}
\end{figure*}

\subsection{Federated Learning}
Federated learning~\cite{mcmahan2017federated}, which is proposed by Google, enables mobile users to collaboratively train a shared prediction model and keep all their distributed training data local. Users typically train the model locally on their own device, upload the summarized parameters as a small focused update, and download the parameters averaged with other users' updates collaboratively using secure multiparty computation (MPC), without needing to share their personal training data in the cloud.

Federated learning is demonstrated to be private since the individual users' data is stored locally and the updates are securely aggregated by leveraging MPC to compute model parameters. However, the recent paper~\cite{hitaj2017deep} declares that federated learning is secure only if we consider the attacker is the cloud provider who scrutinizes individual updates. If the attackers are the casual colluding participants, private data of one participant can still be recovered by other users who aim to attack. Hitaj et al.~\cite{hitaj2017deep} have shown that only applying differential privacy in federated learning is not sufficient to mitigate the GAN-based attack, and a malicious user is able to successfully recover private data of others. 

In Section~\ref{sec:attacks}, we show that \dpautogm is robust enough to mitigate the GAN attack. Thus, in this section, we will mainly consider whether \dpautogm can be well integrated into the federated learning to protect privacy and retain high data utility.
We show the concrete steps toward integrating \dpautogm as below. Note that the first two steps are added to the original federated learning platform.
\begin{enumerate}[wide=1pt,leftmargin=10pt]
\item \textit{Users first train \dpautogm locally with the private data.}
\item \textit{After training \dpautogm, users use \dpautogm and public data to generate new training data.}
\item Users train the local model with \textit{generated data} locally and upload the summarized parameters to the server.
\end{enumerate}
Next we will empirically show that \dpautogm can be well integrated into federated learning over four datasets. 

\noindent \textbf{Settings.}
The structure of an autoencoder and differential privacy parameters can be specified by a central server such as Google, and will be publicly available to any user. As a proof of concept, we hereby set the differential privacy parameters~$\epsilon$ and $\delta$ to be $1$ and $10^{-5}$, respectively. For each user in the federated learning, we evenly split the private data and public data for usage.

\noindent \textbf{Hyper-parameters.}
We set the default learning rate to be 0.001, the batch size to be 100, the number of users to be 10, and the uploading fraction to be 0.1. We will also test how \dpautogm performs across different parameters later.

\noindent \textbf{In Comparison with the Original Federated Learning.}
We apply \dpautogm to federated learning and compare it with the original setting without \dpautogm. As we can see from Figure~\ref{service_fe}, after we add \dpautogm model to the pipeline, the accuracy drops only within 5\% for all datasets. Hence, it shows the proposed \dpautogm can be well integrated into federated learning without affecting its utility too much. In the following part, we study in detail about the model sensitivity on the \mnist dataset.

\noindent \textbf{Effect of Other Parameters.}
We further examine the effect of the number of users and the upload fraction over the differentially private federated learning model.

\noindent \textit{\textbf{(a) Number of Users.}} We choose the number of users to be 10, 20, and 40. From Figure~\ref{fede_number}, we can see the difference in number of users will only affect the speed of convergence a bit without affecting the final data utility. We find that although more users will take slightly more time for the model to converge, the accuracy of the differentially private model actually converges to the same result within 50 epochs.

\noindent \textit{\textbf{(b) Upload Fraction.}} We choose the upload fraction as 0.001, 0.01, and 0.1 to analyze the proposed method. As we can see from Figure~\ref{fede_upload}, different learning rates only have negligible impacts on the trained model.

\begin{figure*}[t]
  \centering
  \mbox{
  \subfloat[Accuracy on \mnist with different number of users \label{fede_number}]{\includegraphics[width=0.4\textwidth]{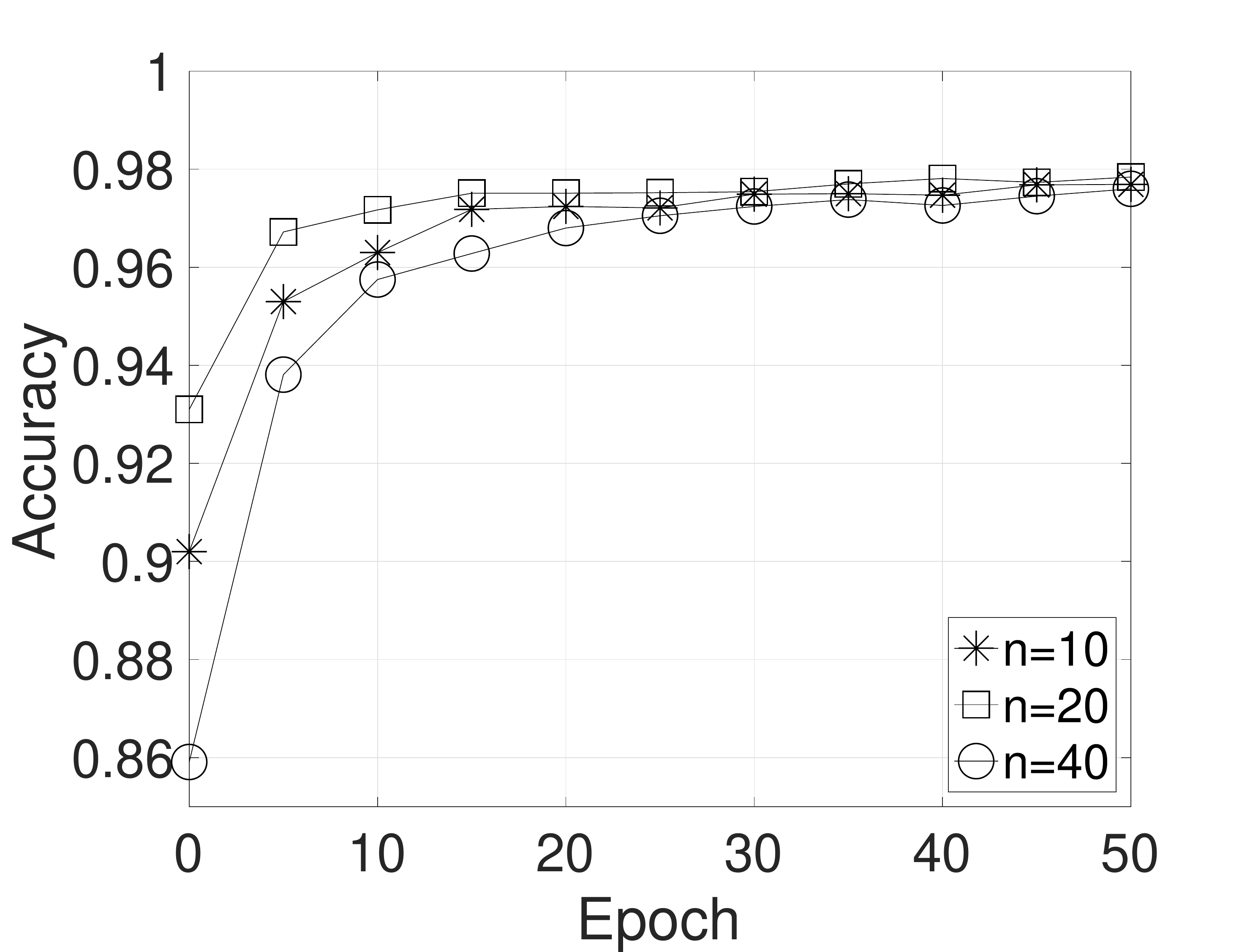}}
    \quad
  \subfloat[Accuracy on \mnist across different upload fraction \label{fede_upload}]{\includegraphics[width=0.4\textwidth]{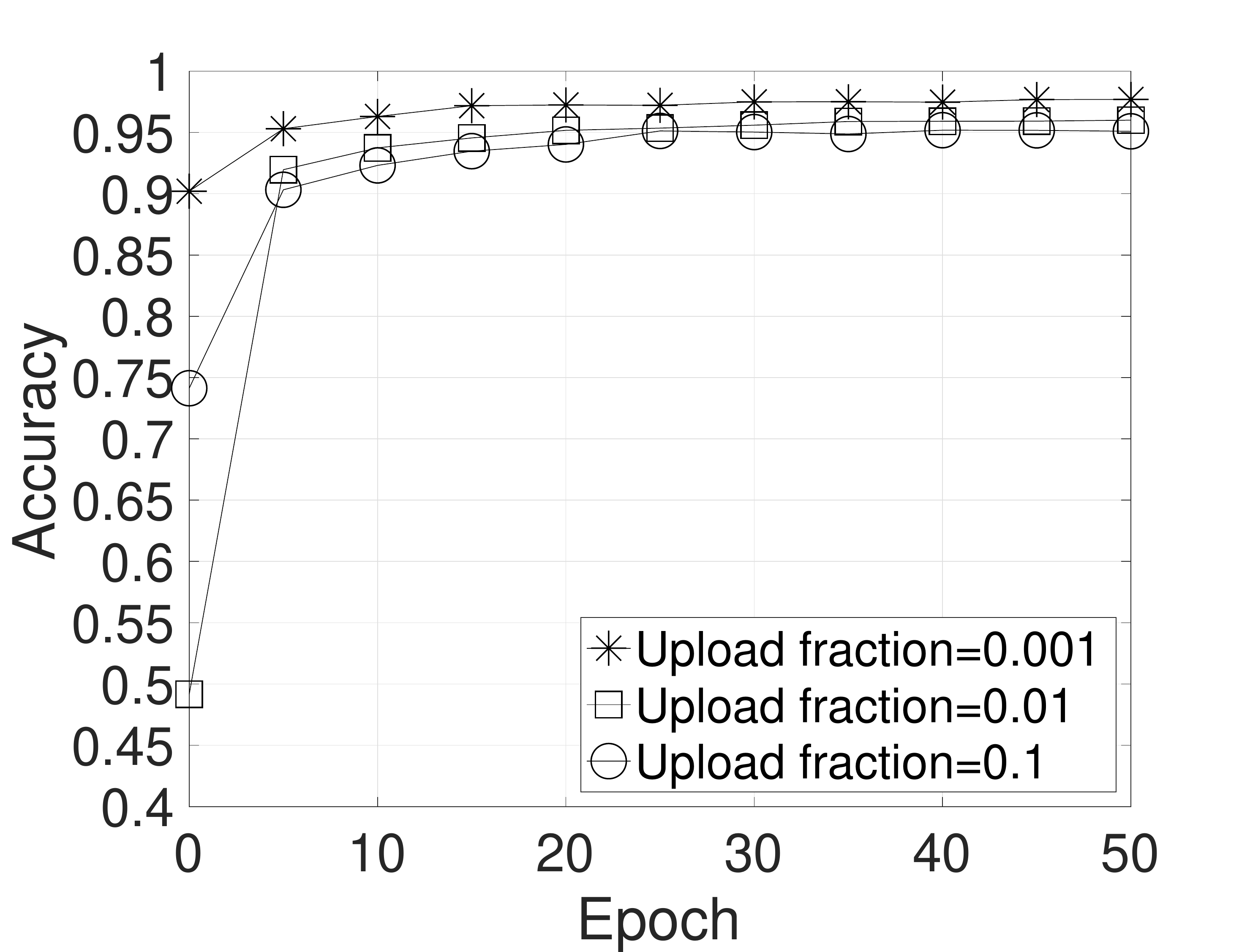}}
    }
  \caption{The performance of federated learning integrated \dpautogm under different hyper-parameters}
  \label{hyper}
  \vspace{-0.2cm}
\end{figure*}

\noindent \textbf{Remark.} We have shown that \dpautogm can be well integrated with MLaaS and federated learning, and \dpvae can be well integrated with MLaaS. The integrated models can protect privacy and preserve high data utility at the same time.

\section{Related Work}

\subsection{Privacy Attacks on Machine Learning Models}

Specifically, Homer et al.~\cite{homer2008resolving} show that it is possible to learn whether a target individual was related to certain disease by comparing the target's profile against the aggregated information obtained from public sources. This attack was then extended by Wang et al.~\cite{wang2009learning} by performing correlation attacks, without prior knowledge about the target. Backes et al.~\cite{backes2016membership} propose to conduct the membership inference attack against individuals contributing their microRNA expressions to scientific studies. If an attacker can learn information about individual's genome expression, he can potentially infer/profile the victim's future/historical health records, which can lead to severe consequences. Shokri and Shmatikov~\cite{shokri2016membership} later show that machine learning models can leak information about medical data records by performing membership attack against well trained models. Recently, Hitaj et al.~\cite{hitaj2017deep} show that a GAN-based attack can compromise user privacy in the collaborative learning setting~\cite{shokri2015privacy}, where each participant collaboratively trains his or her own model with private data locally. Hitaj et al.~\cite{hitaj2017deep} also warn that simply adding differentially private noise is not robust enough to mitigate the attack. In addition, Hayes et al.~\cite{hayes2018logan} investigate the membership inference attack for generative models by using GANs~\cite{goodfellow2014generative} to detect overfitting and recognize training inputs. More recently, Liu et al.~\cite{liu2018generative} define the co-membership inference attack against generative models. 

Given these existing privacy attacks, learning with generated data from DP generative models can potentially defend against them, such as the representative model inversion attack, membership inference attack, and GAN-based attack against collaborative deep learning. To the best of our knowledge, the learning method that can defend against all these attacks has not been proposed or systematically examined before.

\subsection{Differentially Private Learning Methods}

The goal of differentially private learning models is to protect sensitive information of individuals within the training set. Differential privacy is a strong and common notion to protect the data privacy~\cite{dwork2014algorithmic}. 
Differential privacy can also be used to mitigate membership inference attacks, as its indistinguishability-based definition formally proves that the presence or absence of an instance does not affect the output of the learned model significantly~\cite{shokri2016membership}. 
A common approach to achieving differential privacy is to add noise from Laplacian~\cite{dwork2008differential} or Gaussian distribution~\cite{dwork2006our} whose variance is determined by the privacy budget. In practice, differentially private schemes are often tailored to the spatio-temporal location privacy analysis~\cite{machanavajjhala2008privacy,  rastogi2010differentially, shokri2012protecting, acs2014case, to2016differentially}.

To protect the privacy of machine learning models, random noise can be injected to input, output, and objectives of the models. 
Erlingsson et al.~\cite{erlingsson2014rappor} propose to randomize the input and show that the randomized input still allows data collectors to gather meaningful statistics for training. Chaudhuri et al.~\cite{chaudhuri2011differentially} show that by adding noise to the cost function minimized during learning, $\epsilon$-differential privacy can be achieved. 
In terms of perturbing objectives, Shokri et al.~\cite{shokri2015privacy} show that deep neural networks can be trained with multi-party computations from perturbed model parameters to achieve differential privacy guarantees.
Deep learning with differential privacy is proposed~\cite{abadi2016deep} by adding noise to the gradient during each iteration. 
They further use moment accountant to keep track of the spent privacy budget during the training phase. 
However, the prediction accuracy of the deep learning system will degrade more than 13\% over the CIFAR-10 dataset when large differential privacy noise is added~\cite{abadi2016deep}, which is unacceptable in many real-world applications where high prediction accuracy is pursued, such as autonomous driving~\cite{geiger2012we} and face recognition~\cite{graham1998characterising}. This is also aligned with the warning proposed by Hitaj et al.~\cite{hitaj2017deep} that using differential privacy to provide strong privacy guarantees cannot be applied to all scenarios, especially where the GAN-based attack can be applied. 
Later, private aggregation of teacher
ensembles (PATE) has been proposed, which first learns an ensemble of teacher models on a disjoint subset of training data, and aggregates the output of these teacher models to train a differentially private student model for prediction~\cite{papernot2016semi}. The queries performed on the teacher models are designed to minimize the privacy cost of these queries. Once the student models are trained, the teacher models can be discarded. PATE is within the scope of knowledge aggregation and transfer for privacy~\cite{pathak2010multiparty, hamm2016learning}.
An improved version of PATE, scalable PATE, is proposed by introducing new aggregation algorithm to achieve better data utility~\cite{papernot2018scalable}. 

At inference, random noise can also be introduced to the output to protect privacy. However, this severely decays the test accuracy, because the amount of noise introduced increases with the number of inference queries answered by the machine learning model. Note that homomorphic encryption~\cite{gilad2016cryptonets} can also be applied to protect the confidentiality of each individual input. The main limitations are the performance overhead and the restricted set of arithmetic operations supported by homomorphic encryption.

Various approaches have been proposed for the automatic discovery of sensitive entities, such as identifiers, and redact them to protect privacy. The simplest of these rely on a large collection of rules, dictionaries, and regular expressions (e.g., \cite{Beckwith:06,sweeney1996replacing}). Chakaravarthy et al.~\cite{chakaravarthy2008efficient} proposed an automated data sanitization algorithm aimed at removing sensitive identifiers while inducing the least distortion to the contents of documents. However, this algorithm assumes that sensitive entities, as well as any possible related entities, have already been labeled. Similarly, Jiang et al.~\cite{jiang2009t}~have developed the $t$-plausibility algorithm to replace the known (labeled) sensitive identifiers within the documents and guarantee that the sanitized document is associated with at least $t$ documents. Li et al.~\cite{li2017scalable} have proposed a game theoretic framework for automatic redacting sensitive information. In general, finding and redacting sensitive information with high accuracy is still challenging.

In addition, there has been recent work on making generative neural networks differentially private~\cite{acs2018differentially}. It achieved their differentially private generative models on VAEs by using differentially private kernel $k$-means and differentially private gradient descent. Different from their work, we realize differentially private generative models on both autoencoders and VAEs. We further show that our proposed methods can mitigate realistic privacy attacks and can  seamlessly be applied to real-world applications.

In general, unlike previously proposed techniques, our proposed differentially private generative models can guarantee differential privacy while maintaining data utility for learning tasks. Our proposed models achieve all three goals: protect privacy of training data; enable users to locally customize the privacy preference by configuring the generative models; retain high data utility for generated data. The proposed models achieve these goals at a much lower computation cost than aforementioned differentially private mechanisms and cryptographic techniques, such as secure multi-party computation or homomorphic encryption~\cite{gilad2016cryptonets}. Our generative models can also seamlessly be integrated with MLaaS and federated learning in practice.

\section{Conclusion}
We have designed, implemented, and evaluated two differentially private data generative models---a differentially private autoencoder-based generative model (\dpautogm) and a differentially private variational autoencoder-based generative model (\dpvae). We show that both models can provide strong privacy guarantees and retain high data utility for machine learning tasks. We empirically demonstrate that \dpautogm is robust against the model inversion attack, membership inference attack, and GAN-based attack against collaborative deep learning, and \dpvae is robust against the membership inference attack. We conjecture that  the  key  to  defend against model inversion and GAN-based attacks is to distort the training data while differential privacy is targeted to protect membership privacy.
Furthermore, we show that the proposed generative models can be easily integrated with two real-world applications---machine learning as a service and federated learning, which are previously threatened by the membership inference attack and GAN-based attack, respectively.  We demonstrate that the integrated system can both protect privacy of users' data and retain high data utility.

Through the study of privacy attacks and corresponding defensive methods, we here emphasize that it is important to generate differentially private synthetic data for various machine learning systems to secure current learning tasks. As we are the first to propose differentially private data generative models that can defend against the contemporary privacy violation attacks, we hope that our work will help pave the way toward designing more effective differentially private learning methods.

\bibliographystyle{abbrv}
\bibliography{PETS_19}

\clearpage
\section*{Appendix}
\appendix

\section{Model Architectures}\label{sec: model_arc}

\begin{table}[tbhp]
\caption{Model structures of \dpautogm over different datasets}
\resizebox{\linewidth}{!}{
\centering
\begin{tabular}{c|c|c|c}
  \toprule
  {\bf MNIST} & {\bf Adult Census Data} & {\bf Texas Hospital Stays Data} & {\bf Malware Data}\\
  \midrule
{FC(400)+Sigmoid} & {FC(6)+Sigmoid} & {FC(400)+Sigmoid} & {FC(50)+Sigmoid} \\
{FC(256)+Sigmoid} & {FC(100)+Sigmoid} & {FC(776)+Sigmoid} & {FC(142)+Sigmoid} \\
{FC(400)+Sigmoid} & {} & {} & {} \\
{FC(784)+Sigmoid} & {} & {} & {} \\
\bottomrule
\end{tabular}}
\end{table}

\begin{table}[htbp]
\caption{Model structures of \dpvae over MNIST}
\centering
\begin{tabular}{c}
  \toprule
  {\bf MNIST}\\
  \midrule
{FC(500)+Sigmoid}\\
{FC(500)+Sigmoid}\\
{FC(20)+Sigmoid ; FC(20)+Sigmoid}\\
{Sampling Vector(20)}\\
{FC(500)+Sigmoid}\\
{FC(500)+Sigmoid}\\
{FC(784)+Sigmoid}\\
\bottomrule
\end{tabular}
\end{table}

\begin{table}[htbp]
\caption{Structures of machine learning models over different datasets with \dpautogm}
\resizebox{\linewidth}{!}{
\centering
\begin{tabular}{c|c|c|c}
  \toprule
  {\bf MNIST} & {\bf Adult Census Data} & {\bf Texas Hospital Stays Data} & {\bf Malware Data}\\
  \midrule
{Conv(5x5,1,32)+Relu} & {FC(16)+Relu} & {FC(200)+Relu} & {FC(4)+Relu} \\
{MaxPooling(2x2,2,2)} & {FC(16)+Relu} & {FC(100)+Relu} & {FC(3)+Relu} \\
{Conv(5x5,32,64)+Relu} & {FC(2)} & {FC(10)} & {FC(2)} \\
{MaxPooling(2x2,2,2)} & {} & {} & {} \\
{Reshape(4x4x64)} & {} & {} & {} \\
{FC(10)} & {} & {} & {} \\
\bottomrule
\end{tabular}}
\end{table}

\begin{table}[h]
\caption{Structures of machine learning models over different datasets with \dpvae}
\centering
\begin{tabular}{c}
  \toprule
  {\bf MNIST}\\
  \midrule
{Conv(5x5,1,32)+Relu}  \\
{MaxPooling(2x2,2,2)}  \\
{Conv(5x5,32,64)+Relu} \\
{MaxPooling(2x2,2,2)}\\
{Reshape(7x7x64)}\\
{FC(1024)}\\
{FC(10)} \\
\bottomrule
\end{tabular}
\end{table}
\end{document}